\newtheorem{theorem}{Theorem}[section]
\newtheorem{lemma}[theorem]{Lemma}
\newtheorem{remark}{Remark}
\newtheorem{definition}{Definition}
\newlength{\singlecolumnwidth}
\title{Scalable Satellite Swarm Deployment via Distance-based Orbital Transition Under $J_2$ Perturbation}
\author{Yuta Takahashi \footnote{Ph.D. Candidate, Department of Mechanical Engineering, 2-12-1 \# I3-17 Ookayama; takahashi.y.cl@m.titech.ac.jp. Student Member AIAA; Researcher and Engineer, Satellite R\&D Division, 6-3-2 Toyo.
}}
\affil{Institute of Science Tokyo, Meguro-ku, Tokyo, 152-8550, Japan
\\Interstellar Technologies Inc., Koto-ku, Tokyo, 135-0016, Japan}
\author{Shin-ichiro Sakai\footnote{Professor, Department of Spacecraft Engineering, Institute of Space and Astronautical Science, Japan Aerospace Exploration Agency, 3-1-1 Yoshinodai, Chuo-ku, Sagamihara, Kanagawa 252-5210, Japan.}}
\affil{Japan Aerospace Exploration Agency, Sagamihara, Kanagawa, 252-5210, Japan}
\begin{document}
\twocolumn[
\maketitle
\footnotetext{Presented as Paper 2025-2068 at the AIAA SCITECH 2025 Forum, Orlando, FL, January 6–10, 2025.
}
\begin{abstract}
This paper presents an autonomous guidance and control strategy for a satellite swarm that enables scalable distributed space structures for innovative science and business opportunities. The averaged $J_2$ orbital parameters that describe the drift and periodic orbital motion were derived along with their target values to achieve a distributed space structure in a decentralized manner. This enabled the design of a distance-based orbital stabilizer to ensure autonomous deployment into a monolithic formation of a coplanar equidistant configuration on a user-defined orbital plane. Continuous formation control was assumed to be achieved through fuel-free actuation, such as satellite magnetic field interaction and differential aerodynamic forces, thereby maintaining long-term formation stability without thruster usage. A major challenge for such actuation systems is the potential loss of control capability due to increasing inter-satellite distances resulting from unstable orbital dynamics, particularly for autonomous satellite swarms. To mitigate this risk, our decentralized deployment controller minimized drift distance during unexpected communication outages. As a case study, we consider the deployment of palm-sized satellites into a coplanar equidistant formation in a $J_2$-perturbed orbit. Moreover, centralized grouping strategies are presented.
\end{abstract}
]
\section*{Nomenclature}
\noindent
\begin{table}[H]
  \centering
  \renewcommand\arraystretch{1.0}
  \begin{tabular}{@{}l @{\quad=\quad} l@{}}
$C^{B/A}$ & coordinate transformation matrix from $\mathcal{A}$ to $\mathcal{B}$\\
$c_{\pm}$&$J_2$ constants $c_{\pm}=\sqrt{1\pm s_{J_2}}$\\
$k_0$&averaged $J_2$ dynamics constant $k_0=\frac{2c_+}{\omega_{xy}c_-}$\\
$k_1$&averaged $J_2$ dynamics constant $k_1=\frac{c_-\epsilon_2}{4c_+\omega_{xy}}$\\
$k_{J_2}$& 
$J_2$ coefficient $2.633e^{10} \mathrm{~km}^5 / \mathrm{s}^2$\\
$\mu_g$ & Earth gravitational parameter, 3.986e$^{14}$ m$^3$/s$^2$\\
$[\,\cdot\,]_{j}$&$j$-th element of the enclosed vector\\
$[\,\cdot\,]_{jk}$&$(j,k)$-element of the enclosed matrix\\
$\boldsymbol{x}_{jk}$&relative  vector of $j$-th and $k$-th, $\boldsymbol{x}_{j}-\boldsymbol{x}_{k}$\\
$\boldsymbol{x}_{j\leftarrow k}$&interaction vector in $j$-th by $k$-th\\
$s_{J_2}$& $s_{J_2}={k_{J_2}(1+3 \cos 2 i_{\text {ref }})}/{(4 \mu_g r_{\mathrm{ref}}^2)}$\\
 \end{tabular}
\end{table}
\section{Introduction}
\lettrine{C}ontinuous formation control realized by fuel-free actuation enables scalable distributed space structures consisting of palm-sized spacecraft swarms. These structures provide an alternative solution for large-scale space systems by lowering their technical barriers. Moreover, their payload distribution enhances system redundancy and flexibility as distributed space systems (DSS). Aerospace communities have maximized satellite size to enhance system performance, including the resolution, communication speed, and effective isotropic radiated power \cite{gardner2006james,chu2014modeling}.
These examples include space telescopes with 6.5 m apertures \cite{gardner2006james} and communication systems with 8m$\times$8m antennas \cite{tuzi2023satellite}. In particular, large-aperture space antennas can substantially reduce the size of the ground antenna and enable direct communication with small ground terminals \cite{tuzi2023satellite,takahashi2025distance,shim2025feasibility}. Future plans for space-based solar power involve deploying kilometer-scale arrays in orbit to ensure sufficient power generation \cite{mcspadden2003space}. However, despite their functionality, large-space structures face challenges such as ground testing, launch vehicle size constraints, vulnerability to single points of failure, and issues related to technology transfer. Another alternative that offers large surface areas in compact volumes is membrane structures \cite{natori1993design,sawada2011mission,you2021ka,takeda2024thermal}, as exemplified by 10 m deployable mesh antennas \cite{natori1993design} and (14 m)$^2$ solar power sails \cite{sawada2011mission}.
The alignment accuracy and scalability of these structures rely on their material properties. However, space radiation degrades support structure materials \cite{natori1993design}, leading to unpredictable deployment behavior, such as wrinkles and deformations, due to slight initial variations \cite{sawada2011mission}. Studies have also considered electrically corrected membrane antenna deployment \cite{you2021ka,takeda2024thermal} and 12 m imaging radars for forest biomass \cite{quegan2019european}. The alignment of distributed space structures requires rapid advancement of state estimation sensors; however, the behavior of rigid multi-agent systems is easier to predict relative to that of flexible structures. The in-orbit assembly strategy lost this advantage and has several drawbacks, including malfunctions and increased complexity due to the docking process. Thus, distributed space structures may soon offer larger, more precise alternatives to monolithic designs.
\subsection{Problem Statement and Related Works} 
A potential actuation method for formation control at close range is the use of fuel-free propulsion, such as satellite magnetic interaction and differential aerodynamic force. Previous studies have used magnetorquers (MTQs), or satellite-mounted coils, for multiple satellite formation control, which is referred to as electromagnetic formation flight  \cite{shim2025feasibility,takahashi2022kinematics,takahashi2021simultaneous,takahashi2020time,tajima2023study,porter2014demonstration,schweighart2006electromagnetic,ivanov2022electromagnetic}. This enables six-degree-of-freedom (6-DoF) control of an arbitrary number of satellites without consuming fuel, by managing both the electromagnetic forces and torque \cite{takahashi2022kinematics,takahashi2021simultaneous,takahashi2020time}. Previous studies have proposed angular momentum management for long-term stability \cite{zhang2016angular,takahashi2022kinematics,takahashi2021simultaneous}, a multi-leader strategy for system scalability \cite{takahashi2021simultaneous}, and learning-based approximation of the exact magnetic field model \cite{takahashi2025coil} and power-optimal current allocation \cite{takahashi2024neural} under high computational burden \cite{schweighart2006electromagnetic,tajima2023study}. Multiple experiments have provided proof-of-concept of electromagnetic formation control for ground \cite{takahashi2025experimental,takahashi2025noda_mmh} and micro-gravity environments \cite{porter2014demonstration}. As another fuel-free actuation, differential aerodynamic drag control has also been studied \cite{sun2017roto,hu2021differential,shao2017satellite,ivanov2018study,sun2018neural} based on flight demonstrations \cite{gunter2016ranging,foster2018constellation}. Studies have also designed controllable plates for satellites to decouple differential drag and lift \cite{sun2017roto} or change the center of pressure to control both translational and rotational motions simultaneously \cite{sun2018neural}. Adjusting the windward attitude also generates aerodynamic forces to achieve in-plane and out-of-plane formation control via three-axis attitude control \cite{ivanov2018study} and yaw angle control for the Earth-pointing attitude \cite{hu2021differential}. As the traditional actuation method, thrusters produce undesired plumes, whose interactions can cause one to two orders of magnitude greater acceleration than short-distance air drag \cite{fehse2003automated}, and optical contamination. We note that undesirable neighboring reaction forces by MTQ and aerodynamic force decay in smaller domains \cite{takahashi2022kinematics,sturrock2025modelling}. Therefore, fuel-free propulsion is a viable actuation method for formation control at close range as it ensures long-term stability of large-scale satellites. 

However, the scalable formation control strategy for satellite swarms remains an open problem, and fuel-free actuation suffers from its distance-dependent nature under unstable orbital dynamics. First, space demonstrations for DSSs typically use intermediate and impulsive actuators, such as thrusters. However, the control functions of these actuators are generally limited to passive formation, such as collision avoidance and bounded motion, due to operational challenges and fuel limitations \cite{morgan2012swarm,tuzi2023satellite} associated with the relative dynamic model \cite{fehse2003automated,schweighart2002high,morgan2012swarm}. The examples  include previous studies on a long-term efficient orbital controller using intermediate control \cite{morgan2012swarm}, free-flying swarm control \cite{tuzi2023satellite}, differential drag control for absolute orbital control \cite{ivanov2018study}, and bounded position control using MTQ \cite{ivanov2022electromagnetic}. In addition, the associated guidance and control strategy should realize autonomous satellite formation by a fuel-free propulsion actuation. Second, fuel-free actuation typically degrades with satellite distance. Electromagnetic forces decrease with relative distance from the actuator \cite{takahashi2022kinematics,shim2025feasibility}, making control challenging at greater distances. Aerodynamic forces also become invalid when the $J_2$-perturbed acceleration is larger than the aerodynamic drag and lift for large flow separation \cite{shao2017satellite}. Thus, we should derive an orbital transition strategy that enables satellites to maintain their control performance through connectivity with neighboring satellites rather than prioritizing time- or energy-optimal approaches.
\subsection{Contributions and Paper Organization}
This study presents an autonomous guidance and control system for satellite swarms to address the above-summarized technology gaps that prevent the realization of large-scale space structures through the continuous formation control of satellites. The specific contributions are as follows: In section \ref{Preliminaries}, the relative orbital dynamics and their closed-form solutions are formulated. In section \ref{Problem_Formulation}, the averaged $J_2$ relative orbital parameters and distance-based orbital stabilizer are derived. The derived orbital parameters represent a generalization of previous relative orbit parameters \cite{ivanov2022electromagnetic,schweighart2002high} for the $J_2$ effect. Specifically, these parameters decompose orbital motion into drift and periodic components and rewrite the $J_2$ perturbed relative orbital dynamics as a hierarchical linear system. The relative distance controller should maintain a drift of zero; however, this nonholonomic constraint prevents the realization of a constant gain state-feedback controller \cite{brockett1983asymptotic} for distance control. Motion planning also incurs high-dimensional computational costs for a small satellite. We also present a feedback controller system that minimizes both the drift motion and distance error while suppressing excessive relative distance drift motion. Conditions for control gains are also derived. In section \ref{Performance_Analysis}, numerical validation of 100 satellite swarm deployments into different orbital planes is presented. Finally, section \ref{Conclusion} presents the conclusions. Notably, this study assumes full knowledge of neighboring satellite states to demonstrate the control law effectiveness.
\section{Preliminaries}
\label{Preliminaries}
This section summarizes the mathematical method for controlling a group of satellites orbiting Earth. 
\subsection{Algebraic Graph Theory \cite{mesbahi2010graph}}
\label{Algebraic_Graph_Theory}
This subsection provides a brief overview of graph theory; for detailed information, please refer to \cite{mesbahi2010graph}. We consider an undirected graph $\mathcal{G}$ specified by $N$ nodes $\mathcal{V}=\{1,\ldots,N=|\mathcal{V}|\}$, where $|\mathcal{S}|$ denotes that the cardinality of an arbitrary set $\mathcal{S}$. For the edge set $\mathcal{E} \subseteq \mathcal{V}\times \mathcal{V}$, where $|\mathcal{E}|=P$, edge $(j, k)\in\mathcal{E}$ indicates that the $k$th node can obtain some information, such as its relative distance or interactions from the $j$th node. The sets $\mathcal{N}_j= \{k\ |\ (j, k) \in \mathcal{E}\}$ are called $j$th neighbors. A connected graph $\mathcal{G}$ includes an arbitrary spanning tree subgraph $\mathcal{G}_\tau$ and the remaining edges $\mathcal{G}_c$, that is, $\mathcal{G}=\mathcal{G}_\tau \cup \mathcal{G}_c$, and incidence matrix $E(\mathcal{G})\in\mathbb{R}^{|\mathcal{V}|\times|\mathcal{E}|}$, which is a linear combination of tree edges \cite{mesbahi2010graph}:
\begin{equation}
\label{spanning_tree}
\begin{aligned}
&[E(\mathcal{G})]_{ij}=\left\{\begin{aligned}
-1 &\text { if } v_i \text { is the tail of } e_j \\
1 &\text { if } v_i \text { is the head of } e_j \\
0 &\text { otherwise. }
\end{aligned}\right.,\\
&E(\mathcal{G})=\begin{bmatrix}
E\left(\mathcal{G}_\tau\right)& E\left(\mathcal{G}_c\right)
\end{bmatrix}
\triangleq E_\tau R(\mathcal{G})
\end{aligned}
\end{equation}
where $E_\tau\triangleq E\left(\mathcal{G}_\tau\right)$ and edges $e_{j\in[1,p]}$ are in a graph whose edges are arbitrarily oriented; $R(\mathcal{G})=[I,T_\tau^c]\in\mathbb{R}^{N-1\times P}$; and $T_\tau^c=(E_\tau^\top E_\tau)^{-1} E_\tau^\top E_c$,
such that $E(\mathcal{G}_\tau) T_\tau^c=E(\mathcal{G}_c)$. This defines the Laplacian matrix $L\in\mathbb{R}^{N\times N}$, whose eigenvalues are $0=\lambda_1(\mathcal{G})\leq \lambda_2(\mathcal{G})\leq\cdots\leq \lambda_n(\mathcal{G})$, and edge Laplacian matrix $L_\mathsf{e}\in\mathbb{R}^{P\times P}$
$$
L=E(\mathcal{G})E^\top(\mathcal{G}), \quad L_e = E^\top(\mathcal{G})E(\mathcal{G})
$$
We define the eigendecomposition of $L_{\mathsf{e}}$ and singular value decomposition of $E$ as
$$
E=U\Sigma V^\top=U_+\Sigma_+ V_+^\top, \quad L_{\mathsf{e}}=V_+(\Sigma_+^\top\Sigma_+)V_+^\top=V_+D_+V_+^\top
$$
where $U\in\mathbb{R}^{|\mathcal{V}|\times |\mathcal{V}|}$ and $V\in\mathbb{R}^{|\mathcal{E}|\times |\mathcal{E}|}$ are orthogonal matrixes, $\Sigma\in\mathbb{R}^{|\mathcal{V}|\times |\mathcal{E}|}$ and $D_+\in\mathbb{R}^{\smash{(|\mathcal{V}|-1)\times (|\mathcal{V}|-1)}}$ are the diagonal matrix, and subscript $[\cdot]_+$ indicates the submatrix corresponding to nonzero eigenvalues.
\subsection{Relative Orbital Dynamics of Spacecraft Formation Flying Under $J_2$ Gravity Effects \cite{fehse2003automated,schweighart2002high,morgan2012swarm}}
\label{Averaged_J_2_Relative_Orbital_Parameters}
This subsection describes the orbital and relative dynamics of the satellite formation flight. For the coordinate system, we mainly used the Earth-centered inertial (ECI) system $\mathcal{I}$, an orbitally fixed coordinate system, and the Local-Vertical-Local-Horizontal frame (LVLH), a $\mathcal{O}$ system. Here, $\mathcal{O}$ is spanned by the unit vectors
$$
{\mathbf{o}_x}=\mathrm{nor}(\mathbf{r}), \quad {\mathbf{o}_y}={\mathbf{o}_z} \times {\mathbf{o}_x}, \quad {\mathbf{o}_z}=\mathrm{nor}(\mathbf{h})=\mathrm{nor}(\mathbf{r} \times \dot{\mathbf{r}})
$$
where the position vectors from the Earth's center are $\mathbf{r}$, the angular momentum vector per unit mass is $\mathbf{h}$, and the normalized function is $\mathrm{nor}(x)=x/\|x\|$.

First, we introduce a reference orbit to compensate for the time-varying $J_2$ geopotential gravity effects. A previous study introduced an artificially adjusted reference orbit such that the radius $r_{\mathrm{ref}}$ is constant, and the orbital angular vector $\omega_{\mathrm{ref}}$ coincides with the average of the satellites \cite{schweighart2002high}. Let us define the $j$-th satellite position from Earth's center as $P_j=[r_{\mathrm{ref}}+x_j;y_j;z_j]$ in the $\mathcal{O}$ frame and its orbital dynamics with a gravity gradient $\nabla U_{J 2}(P_j,i,\theta)\in\mathbb{R}^3$ \cite{schweighart2002high,morgan2012swarm} as $\ddot{P}_j=\nabla U_{J 2}(P_j,i,\theta)+d_e$ where
\begin{equation}
    \label{J2 orbital dynamics}
\nabla U_{J 2}=-
\begin{bmatrix}
\frac{\mu}{\|P_j\|^2}\\0\\0
\end{bmatrix}-\frac{k_{J 2}}{\|P_j\|^4}
\begin{bmatrix}
1-3 \sin ^2 i \sin ^2 \theta\\
 \sin ^2 i \sin 2 \theta\\
 \sin 2 i \sin \theta
\end{bmatrix}
\end{equation}
where the unmodeled disturbances are $d_{e}\in\mathbb{R}^3$, such as high-order gravitational effects and unmodeled differential drag. Then, $\omega_{\mathrm{ref}}$ is computed as $\omega_{\mathrm{ref}}=
[0;0;c_+ \omega_o]$ such that
\begin{equation*}
\omega_{\mathrm{ref}}\times \omega_{\mathrm{ref}}\times r_{\mathrm{ref}}=\int_0^{2\pi}\nabla U_{J_2(\theta)}\frac{\mathrm{d}\theta}{2\pi}
\end{equation*}
where $\omega_0^2={{\mu}/{r_{\mathrm{ref}}^3}}$. Although this angular velocity adjusts the orbital period, its orbital plane is separated because of the time-varying longitudes of the ascending nodes $\Omega(t)$: $\dot{\Omega}(\theta(t))=-{2 k_{J 2} \cos i \sin ^2 \theta}/({\|\mathbf{h}\| r_{\mathrm{ref}}^3})$. To compensate for this, $\theta(t)$ is set to $\theta(t)={\omega}_{z\mathrm{ref}}t$ by using $\dot{\Omega}_{\mathrm{avg}}=\int_0^{2\pi}\dot{\Omega}(\theta)\frac{\mathrm{d}\theta}{2\pi}$ where
\begin{equation}
\label{omega_zref}
{\omega}_{z\mathrm{ref}}=c_+\omega_o -\dot{\Omega}_{\mathrm{avg}j} \cos i_j=
\omega_o \left(c_{+} + \frac{k_{J_2}\cos^2i}{\mu r_{\mathrm{ref}}^{2}}\right).
\end{equation}

Next, we introduce the approximate relative orbital dynamics of the two orbiting satellites, which are useful for deriving a closed-form solution. We define the relative position of the $j$-th satellite from the $k$th satellite as ${r_{jk}}={r}_j-{r}_k=[x_{jk};y_{jk};z_{jk}]$. Then, linearization around the reference orbit yields the dynamics of the relative motion in $\{\mathcal{O}\}$ \cite{fehse2003automated,schweighart2002high}:
\begin{equation}
\label{Hill_dynamics}
\begin{aligned}
&\begin{aligned}
&\ddot{\overline{x}}-2\omega_{xy}\dot{\overline{y}}-3\omega_{xy}^2 \overline{x}-\frac{4 \omega_{xy}^2}{c_-^2/s_{J_2}}\left(2 \overline{x}+\frac{\dot{ \overline{y}}}{ \omega_{xy}}\right)=c_+(u_x+d_x) \\
& \ddot{\overline{y}}+2\omega_{xy}\dot{\overline{x}}={c_-}(u_y+d_y)\\
&\ddot{z}+\omega_z^2  z=2 l_z \omega_z \cos (\omega_z t+\theta_z)+(u_z+d_z) \\
\end{aligned}
\end{aligned}
\end{equation}
\begin{equation*}
\left\{
\begin{aligned}
&\overline{x} =c_{+} x,\quad \overline{y} = c_- y,\quad{\omega}_{xy}=c_-\sqrt{{\mu}/{r_{\mathrm{ref}}^3}},\\
&\omega_z =\omega_{z\mathrm{ref}}+f_1(\delta \dot{\Omega}_{\mathrm{avg}jk}),\ r_z \sin \theta_z=z, \\
&l_z \sin \theta_z+\omega_z r_z \cos \theta_z=\dot{z}\\
&l_z(\delta \dot{\Omega}_{\mathrm{avg}jk})=-r_{\text {ref }}\sin i_{j} \sin i_{k}f_2(\delta \dot{\Omega}_{\mathrm{avg}jk})
\end{aligned}
\right.
\end{equation*}
where the subscript $jk$ is dropped, $f_{1,2}\in\mathbb{R}$ is a function of $\delta \Omega_{jk0}$ \cite{schweighart2002high}, and $z$-axis dynamics has added the term $2 l \omega_z \cos(\cdot)$ to compensate for the cross-track motion errors owing to time averaging. It is worth noting that this model includes the error disturbance by averaging, i.e., $\ddot{r}_{\mathrm{avg}jk}=(\nabla^2 U_{J_2}-
\int_0^{2\pi} \nabla^2 U_{J_2}\frac{\mathrm{d}\theta}{2\pi})$.
\section{Baseline Stabilizer for a Satellite Swarm in a Coplanar Equidistant Plane}
\label{Problem_Formulation}
This section develops a baseline orbital stabilizer for a satellite swarm with equal spacing in a single orbital plane. 
We first define the averaged $J_2$ relative orbital parameters to capture the averaged relative orbital motion under $J_2$ gravity. Then, this section derives its target orbital parameters along with the target plane as ``Swarm Plane $\Phi(\Theta_{P},\Theta_{z-xy})$'' and designs an orbital stabilizer to achieve a distributed space structure in coplanar equidistant. 
\subsection{Averaged $J_2$ Relative Orbital Parameters}
\label{J2_Relative_Orbital_Parameters}
First, we introduce passive stable trajectories in Earth's orbit and derive the averaged $J_2$ relative orbital parameters that are crucial for the following controller design. From the differential equation of the averaged relative orbit dynamics in Eq.~(\ref{Hill_dynamics}), we calculated the analytical solution of the future trajectory predicted at time $t=0$ under no input:
\begin{equation}
\label{CWsol}
\begin{aligned}
&\begin{bmatrix}
    {x}(t)\\
    {y}(t)\\
    {z}(t)
\end{bmatrix}
=
\begin{bmatrix}
r_{\mathrm{o}}(0,t)\\
0
\end{bmatrix}
+
\begin{bmatrix}
   r_{xy}\sin{(\omega_{xy} t + \theta_{xy})}/c_+\\
    2r_{xy}\cos{(\omega_{xy} t + \theta_{xy})}/c_-\\
    (r_z+l_z t) \sin{({\omega}_z t+\theta_z)}
\end{bmatrix}\\
&r_{\mathrm{o}}(0,t)=
\begin{bmatrix}
    {2C_{1}}&{C_4-\epsilon_{2} C_1 t}
\end{bmatrix}^\top,\ \epsilon_2 = \frac{4c_+^2-c_-^2}{c_+c_-}\omega_{xy}
\end{aligned}
\end{equation} 
where $3+5s_{J_2}=4c_+^2-c_-^2$, and its time differentiation yields Eq.~(\ref{Hill_dynamics}); $c_{\pm}$ is given by Eq. ~(\ref{Hill_dynamics}); and $r_{\mathrm{o}}(0,t)\in\mathbb{R}^2$ is the center position of the relative orbit at time $t$ predicted at $t=0$. The orbital indices computed at estimation time ($t=0$) are defined as follows:
\begin{equation}
\label{definition_C}
\left\{
\begin{aligned}
C_{1}&={c_+}/{c_-^2}(2 \overline{x}+{\dot{\overline{y}}}/{ {\omega}_{xy}}),\ C_{3}=\overline{x}-2c_+C_{1}\\
C_{4}&= (\overline{y}-{2 \dot{ \overline{x}}}/{{\omega}_{xy}})/c_-,\ C_{2}=( \overline{y}-c_-C_4)/2,\\
r_{xy}^2&= {C_2^2+C_3^2},\ \theta_{xy}= \tan^{-1}(C_3,C_2),\\ 
C_5&= \dot{z}/\omega_z,\ C_6 = z,\\
r_{z}^2&= {C_6^2+C_5^2},\ \theta_{z}= \tan^{-1}(C_6,C_5).\\
\end{aligned}
\right.
\end{equation}
As for operational convenience, the remaining ``Time'' to escape from a controllable range is sometimes a more valuable index for safety mode than distance. We provide the ``relaxed'' connectable time $\hat{T}_{\mathrm{conn}}(C_{1,4})$ for a deputy satellite to exit from a controllable region based on the analytical solution in Eq.~(\ref{CWsol}). Please refer to subsection \ref{Escape_Connectable_Time_Relaxation} in the Appendix and Fig.~\ref{Typical_relative_orbital_motion} for more accurate definitions that require computational loads. 
\begin{definition}[Connectable Time]  
We define $\hat{T}_{\mathrm{conn}}(C_{1,4})$, where a given $r_s$ coincides with the distance of the ellipse centers $\|r_{\mathrm{o}}(t)\|$, i.e., $\hat{T}_{\mathrm{conn}}(C_{1,4})=\left\{t \geq 0\ |\ \|r_{\mathrm{o}}(t)\|^2= r_s^2\right\}$ as shown in Fig.~\ref{Relaxed_connectable_period}. Its solutions are as follows: 
\begin{equation}
\label{T_conn}
\hat{T}_{\mathrm{conn}}(C_{1,4})
=\max\left(\max\left(\frac{C_4\pm \sqrt{r_s^2-(2C_1)^2}}{\epsilon_2 C_1}\right),0\right).
\end{equation}
Note that although $\hat{T}_{\mathrm{conn}}\rightarrow\infty$ as $\|r_{\mathrm{o}}(0,t)\| \rightarrow 0$, the converse is not guaranteed since $C_1=0$ yields $\hat{T}_{\mathrm{conn}}\rightarrow \infty$.
\end{definition}
\begin{figure}[tb!]
 \centering
  \begin{minipage}[b]{1\singlecolumnwidth}
    \centering
    \hspace*{-0.1cm}%
   \includegraphics[width=1.075\columnwidth]{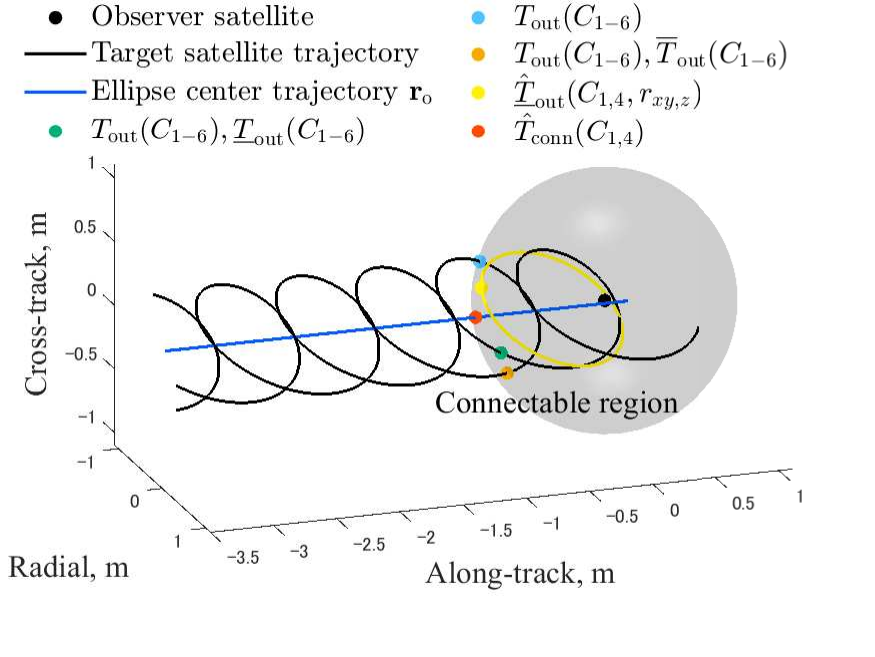}
   \vspace{-0.5cm}
    \subcaption{Typical relative orbital motion and time definitions.}
    \label{Typical_relative_orbital_motion}
  \end{minipage}\\
  \begin{minipage}[b]{1\singlecolumnwidth}
    \centering
    \hspace*{-0.25cm}%
   \includegraphics[width=1.124\columnwidth]{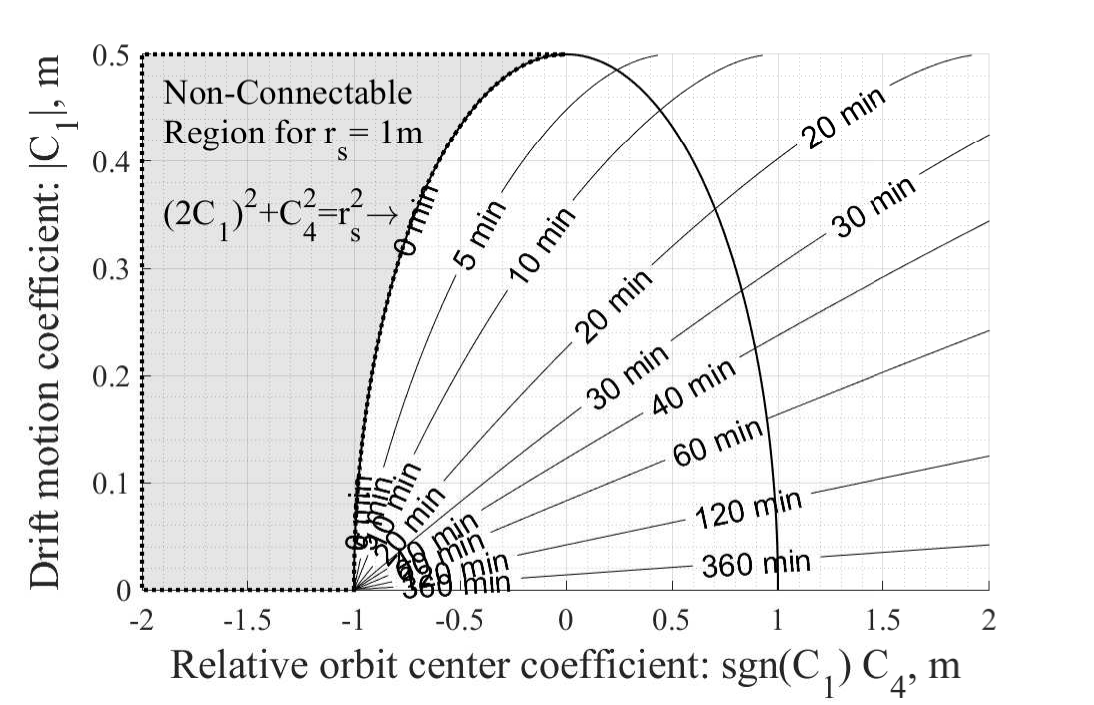}
    \subcaption{Relaxed connectable period $\hat{T}_{\mathrm{conn}}(C_{1,4})$ in Eq.~(\ref{T_conn}).}
  \label{Relaxed_connectable_period}
  \end{minipage}
  \caption{Escape, connectable, and relaxed connectable period estimates for two satellites in relative orbital dynamics derived in subsection \ref{Escape_Connectable_Time_Relaxation} in the Appendix.
  }
\end{figure}
\subsection{Target Relative Orbit Description for a Satellite Swarm with Equal Spacing in a Single Orbital Plane}
\label{swarm_plane}
Next, we derive hierarchical target values for averaged $J_2$ relative orbit parameters to achieve a satellite swarm in the coplanar equidistant plane. Since the averaged dynamics of the xy-plane and z-plane are decoupled, as shown in Eqs.~(\ref{Hill_dynamics}) and (\ref{CWsol}), we set hierarchical targets for $z$-axis motion. In addition, the target states are divided into three types: 1) drift elimination in the xy-plane, 2) orbit plane control in the z-plane, and 3) distance control in the xy-plane.
\subsubsection{Drift Elimination in Swarms with Identical Inclinations}
First, the condition for drift elimination is given as $(C_1,l_z,C_4)=(0,0,0)$ by analytical solution in Eq.~(\ref{J2_Relative_Orbital_Parameters}). Here, $(C_1,l_z)=(0,0)$ for an arbitrary pair in a satellite swarm that achieves a passively stable close orbit, that is, fuel-optimal trajectories with no relative drift motion \cite{fehse2003automated,schweighart2002high,morgan2012swarm,ivanov2022electromagnetic}. Along with these constraints, $C_4=0$ introduces a collision avoidance trajectory due to concentration \cite{morgan2012swarm}. Moreover, the uniformity of the difference in $\theta_z,\theta_{xy}$ introduces ideal orbital planes. Here, $l_z=0$ is naturally satisfied if the satellites have identical inclinations such that $\delta i_{jk}=i_j - i_k \approx 0$ for arbitrary $j$th and $k$th inclinations. For $\delta i_{jk}\approx 0$, the $|\dot{\Omega}_{\mathrm{avg}}|$ in Eq.~(\ref{omega_zref}) is bounded as $|\dot{\Omega}_{\mathrm{avg}}|\lesssim 2e^{-6}\cos i_{\mathrm{ref}}$, which derives 
$$
|\delta \dot{\Omega}_{\mathrm{avg}jk}|
\approx 2e^{-6}\sin i_k \frac{\dot{z}_{jk0}}{\omega_{z\mathrm{ref}} r_{\text {ref}}}
\lesssim 2.5e^{-10}\dot{z}_{jk0}
$$
by Taylor series expansion $
\cos(i_k + \delta i_{jk})\approx \cos i_k-\sin i_k\cdot\delta i_{jk}$. The definitions in Eq.~(\ref{Hill_dynamics}) and $|\delta \dot{\Omega}_{\mathrm{avg}jk}|\approx 0$ yield 
$$
f_{1,2}(\delta \dot{\Omega}_{\mathrm{avg}jk})\approx 0\ \Rightarrow\ \omega_z \approx\omega_{z\mathrm{ref}},\quad l_z(\delta \dot{\Omega}_{\mathrm{avg}jk})\approx 0.
$$
These results show that 
The $z$-axis amplitude takes a constant value, and the orbital plane rotation due to $\omega_x\neq \omega_z$ presents a steady rate under $\delta i_{jk}\approx 0$. Note that the validity of $\delta i_{jk}\approx 0$ can be evaluated by $\delta i_{jk}={\dot{z}_{jk0}}/{(\omega_{\mathrm{ref}z} r_{\text {ref}}})$ \cite{schweighart2002high}. 

Note that for the non-drift conditions between the reference orbit and arbitrary $j$th satellite, the following constants are added \cite{schweighart2002high}: $$\frac{c_+}{c_-^2}\left(\dot{\overline{y}}_j+\frac{2 \overline{x}_j}{ \omega_{xy}}\right)=C_{1c}=\frac{c_+}{c_-^2}\frac{c_- k_{J_2}\sin ^2 i_{\mathrm {ref }}}{2\omega_{z\mathrm{ref}}r^4_{\mathrm{ref}}}.
$$
\subsubsection{Relative Orbit Plane Control}
\begin{figure}[tb!]
       \centering
\includegraphics[width=\singlecolumnwidth]{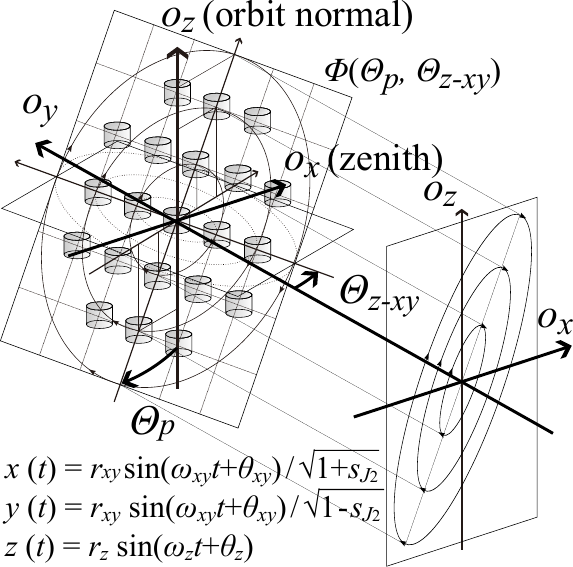}
\caption{One desired example of the distributed space structure on the grid.}
\label{grid_formation_ver2}
\end{figure}
As the second target, we describe relative orbit plane control in the z-plane. The analytical orbital solutions in Eq.~(\ref{CWsol}) show the conditions for overall satellite swarms to stay in the same orbital plane: $\delta \theta=(\theta_z-\theta_{xy})$ and $(\omega_{xy}-\omega_z)$ take the same values between arbitrary satellite pairs. We can simply calculate specific values of $\delta \theta$ and apply user-defined orbital plane angles $(\Theta_{P},\Theta_{z-xy})$ as follows:
\begin{equation}
\begin{aligned}
&\delta\theta(t)=\theta_{z}(t)-\theta_{xy}(t)=\tan^{-1}(2\tan \Theta_{z-xy}),\\
&r_{z}(t)= \frac{r_{xy}(t)}{\tan\Theta_P}\frac{\cos\Theta_{z-xy}}{\cos(\delta\theta(t))}
\end{aligned}
\end{equation}
Subsequently, the hierarchical target of $r_{zd}$ and $\omega_{zd}$ in Eq.~(\ref{definition_C}) for a given $r_{xy}$, $\theta_{xy}$, and $\omega_{xy}$ can be defined along with $C_{5d0,6d0}$ as follows:
\begin{equation}
\label{r_xy_and_r_z}
\begin{aligned}
&\left\{
\begin{aligned}
&\theta_{zd}(\Theta_{z-xy})\triangleq \theta_{xy}(t)+\tan^{-1}(2\tan \Theta_{z-xy}),\\
&r_{zd}\triangleq \frac{r_{xy}(t)}{\tan\Theta_P}\frac{\cos\Theta_{z-xy}}{\cos(\theta_z - \theta_{xy})},\quad \omega_{zd}\triangleq\omega_{xy}(i(t))
\end{aligned}
\right.\\
&\Leftrightarrow C_{5d}=r_{zd} \cos(\theta_{zd}),\quad C_{6d} =r_{zd} \sin(\theta_{zd})
\end{aligned}
\end{equation}
Note that $\omega_{zd}=\omega_{xy}$ is achieved by active $z$-axis orbit control and the feedforward term of $z$-axis $u_{dz}(t)\in\mathbb{R}$, thereby maintaining the $z$-axis orbital frequency $\omega_z$ into $\omega_{xy}$: 
\begin{equation}
\label{z_axis_feedforward_term}
u_{dz}=-r_{zd}\omega_{z}^2 \sin(\omega_z t + \theta_z)
+r_{zd}\omega_{xy}^2 \sin(\omega_{xy} t + \theta_z).
\end{equation}
These yield the desired stable trajectories $p_d(t)$ for realizing the described satellite swarm trajectories as shown in Fig.~\ref{grid_formation_ver2} as follows:
\begin{equation}
\label{desired_position}
{p}_d=
\begin{bmatrix}
   (1/c_+)r_{xyd}\sin{(\omega_{xy} t + \theta_{xy})}\\
    (1/c_-)2r_{xyd}\cos{(\omega_{xy} t + \theta_{xy})}\\
    \frac{r_{xyd}}{\tan\Theta_P}\frac{\cos(\Theta_{z-xy})}{\cos(\theta_z - \theta_{xy})} \sin{({\omega}_{xy} t+\theta_z(\Theta_{z-xy}))}
\end{bmatrix}
\end{equation}
\subsubsection{Distance Adjustment on Normalized Swarm Frame}
The third objective is to realize distance adjustment on a virtual array to achieve an almost constant relative distance related to the user-defined average distance ${r}_{\mathrm{avg}}$. Although the distance of orbital stable solutions $p_d(t)$ in Eq.~(\ref{desired_position}) always takes a periodic value, a constant orbital distance $r_{xy,z}$ is allowed for a stable trajectory. Then, our controller is applied to ensure that the average distance satisfies the user-defined values, i.e., $r^2_{\mathrm{avg}}=\frac{2\pi}{\omega_{xy}}\int_0^{{2\pi}/{\omega_{xy}}} \|{p}_d(\tau)\|^2{\mathrm{d}\tau}$. Then, the desired $r_{xyd}$ is derived based on $r_{zd}(t)$ in Eq.~(\ref{r_xy_and_r_z}):
\begin{equation}
\label{r_avg}
\begin{aligned}
&r_{xyd}(r_{\mathrm{avg}})\triangleq r_{\mathrm{avg}}/S(\Theta_{P},\Theta_{z-xy},s_{J_2})\\
&S=\sqrt{\frac{1}{2}\left(\frac{1+3 \sin^2\Theta _{z-xy}}{{\tan^2\Theta_P}}+\frac{3s_{{J_2}}+5}{1-s_{J_2}^2}\right)}
\end{aligned}
\end{equation}
Note that periodic $p_d(t)$ leads to time-variant neighbors, even around $p_d$, which prevents convergence of the distance-based formation control. To avoid this, we define a normalized swarm frame $\{\hat{\mathcal{S}}\}$ for time invariant neighbors, and it is set to satisfy $\|C^{\hat{S}/O}p_d(t)\|$ = const. This unique frame is obtained by the constraint $\{\hat{\mathcal{S}}\}$, such that its y-z plane aligns with $\Phi(\Theta_{P},\Theta_{z-xy})$, i.e., 
$$
\left\|C^{\hat{S}/O}p_d(t)\right\|=\left\|2r_{xyd}
\begin{bmatrix}
0\\
c_{(\omega_{xy}t+\theta_{xy})}\\
s_{(\omega_{xy}t+\theta_{xy})}
\end{bmatrix}
\right\|=2r_{xyd}.
$$ 
These simple calculations are used to derive the coordinate transformation matrix $C^{\hat{S}/O}(\Theta_{P},\Theta_{z-xy},c_{\pm})$, where $C^{\hat{S}/O}=C^{\hat{S}/S} C^{{S}/O_{J_2}}C^{O_{J_2}/O}$, the coefficient matrix $C^{O_{J_2}/O}$$=\mathrm{diag}([c_+,c_-,1])\in\mathbb{R}^{3\times 3}$ compensates for the $J_2$ term and the matrix $C^{\hat{S}/S}$ adjusts the distance of the elliptical orbit in the y-z plane of frame $\{{\mathcal{S}}\}$:
\begin{equation*}
\label{conversion_for_grouping}
\begin{aligned}
&C^{\hat{S}/S}=
\begin{bmatrix}
1&0&0\\
0&1&0\\
0&0&2
\end{bmatrix}
\begin{bmatrix}
1&0&0\\
0&c_{\Theta_{z-xy}}&s_{\Theta_{z-xy}}\\
0&-s_{\Theta_{z-xy}}&c_{\Theta_{z-xy}}
\end{bmatrix}
\begin{bmatrix}
1&0&0\\
0&1&0\\
0&0&{s_{\Theta_P}}
\end{bmatrix}\\
&C^{\mathcal{S}/O_{J_2}}= 
\begin{bmatrix}
{c}_{\Theta_P}&0&-{s}_{\Theta_P}\\
    0&1&0\\
    {s}_{\Theta_P}&0&{c}_{\Theta_P}
    \end{bmatrix}
\begin{bmatrix}
{c}_{\Theta_{z-xy}}&{s}_{\Theta_{z-xy}}&0\\
-{s}_{\Theta_{z-xy}}&{c}_{\Theta_{z-xy}}&0\\
0&0&1
\end{bmatrix}
\end{aligned}
\end{equation*}
Note that $C^{\mathcal{\hat{S}}/O}$$[-1;0;0]$ indicates the nadir direction used for link budget analysis and aperture design. $C^{\hat{S}/O}$ is used for our centralized grouping algorithm. We also note that we can obtain the time-invariant positional relationship in the frame $\{\overline{S}\}$ derived by $C^{\overline{S}/\hat{S}}C^{\hat{S}/O}$ to compensate for the rotation induced by orbital angular velocity $\omega_{xy}$:
$$
C^{\overline{S}/\hat{S}}(t)=
\begin{bmatrix}
1&0&0\\
0&\cos{\omega_{xy}t}&\sin{\omega_{xy}t}\\
0&-\sin{\omega_{xy}t}&\cos{\omega_{xy}t}
\end{bmatrix}.
$$
\subsection{Distance-based Orbital Stabilizer}
\label{orbit_control}
This subsection designs a distance-based orbital stabilizer to stabilize the satellite swarm into a constant geometric virtual array, as characterized in the previous subsection~\ref{swarm_plane}. We first extend the orbital parameters for two satellites into $N$ parameters. Let $[{C}_{i\in[1,6]}]$ be defined as the averaged $J_2$ orbital parameter vector for $N$ satellites:
$$
[{C}_i]=[C_{i(1c)};\ldots; C_{i(Nc)}]\in\mathbb{R}^N
$$ where $C_{i(jc)}$ represents each parameter of the $j$th satellite with respect to the formation center mass. Note that information on formation center mass is not required in the following steps, and only relative information is needed. The system equation is as follows:
\begin{equation}
\label{dot_C14}
\begin{aligned}
\begin{bmatrix}
    [\dot{C}_1]\\ [\dot{C}_4]\\
    [\dot{C}_2]\\ [\dot{C}_3]\\
    [\dot{C}_5]\\
    [\dot{C}_6]
\end{bmatrix}&=
\begin{bmatrix}
   0\\
     -\epsilon_2 [C_1]\\
    -\omega_{xy} [C_3]\\
    \omega_{xy} [C_2]\\
    -\omega_z [C_6]\\
    \omega_z [C_5]
\end{bmatrix}
+
\begin{bmatrix}

\begin{bmatrix}
0&\frac{k_0}{2}\\
-k_0& 0\\
\frac{c_-k_0}{2}&0\\
0&-c_+k_0
\end{bmatrix}
\begin{bmatrix}
    U_x+[{d}_{x}]\\
    U_y+[{d}_{y}]
\end{bmatrix}\\
 \frac{1}{\omega_z}(U_z+[{d}_z])\\
    0
\end{bmatrix}
\end{aligned}
\end{equation}
where the constant $k_0=\frac{2c_+}{\omega_{xy}c_-}\in\mathbb{R}$, $[u_{x,y,z}]\in\mathbb{R}^N$ and $[d_{x,y,z}(t)]\in\mathbb{R}^n$ are each-axis input and perturbation vectors for $N$ satellite. This also includes the averaging error of $J_2$ gravity effects in subsection~\ref{Averaged_J_2_Relative_Orbital_Parameters} and feedforward term $z$-axis $u_{dz}(t)$ in Eq.~(\ref{z_axis_feedforward_term}).

We also derive a baseline networked controller $f_j=m_ju_j$ for the linear time-invariant system in Eq.~(\ref{dot_C14}). The relative distance controller should prevent excessive relative distance drift, i.e., $(C_1, C_4)\approx(0,0)$. The nonholonomic constraints $(C_1, C_4)=(0,0)$ prevent the generation of a state-feedback controller \cite{brockett1983asymptotic} for distance control. Accordingly, a Lyapunov-based approach is employed to simultaneously suppress drift behavior and steer the relative distance toward the desired setpoint. Based on this framework, a stabilizing control law is developed, along with sufficient conditions for selecting appropriate control gains. We define the amplitude errors $s(r_{xyjk})\in\mathbb{R}$ on $\mathbf{o}_{xy}$ plane:
$$
s(\mathsf{r})= 
\frac{\mathsf{r}-r_{xyd}}{\mathsf{r}-\underline{r}_{xy}}+\frac{\mathsf{r}-r_{xyd}}{(2r_{xyd}-\underline{r}_{xy})-\mathsf{r}}
$$
where $\underline{r}_{xy}$ is related to the maximum length of the satellites, and $(r_{xyjk}-2a_s)^{-1}$ is related to collision avoidance, a fundamental requirement for proximity operations. We also define the distance function $\overline{r}_{xyjk}^2\in\mathbb{R}$ between arbitrary $j,k$-th satellites:
\begin{equation}
\label{overline_r_xy}
{\overline{r}}_{xy(\varrho,\varsigma,\tau,\psi)}^2\triangleq (\varrho C_1+C_3)^2+ \tau(\varsigma (C_4-\psi C_1)+C_2)^2
\end{equation}
where we omit the index $jk$ for simplicity. Then, we design our main controller $u_j$ for $j$th satellte using the Lyapunov direct method \cite{khalil2002nonlinear} and artificial potential function \cite{olfati2006flocking,dimarogonas2008stability} as follows:
\begin{equation}
\label{networked_input_i}
\begin{aligned}
{u}^{\mathrm{main}}_j=&\sum_{k\in\mathcal{N}_j}
\begin{bmatrix}
\frac{k_A}{2k_0}
\begin{bmatrix}
    \gamma_A \gamma_B^2\left( [C_4]-\left(\psi-\frac{2f_{0}}{\gamma_A \gamma_B^2}\right)[C_1]\right)\\
    -[2C_1]\\
\end{bmatrix}\\
-k_{z}{\omega_z}[C_{5}-C_{5d}]
\end{bmatrix}_{jk}\\
&+\sum_{k\in\mathcal{N}_j}
\begin{bmatrix}
\frac{c_-}{2k_0k_1^2} k_{B} s({\overline{r}}_{xy(2c_+,0,1,\psi)})
\begin{bmatrix}
     -\gamma_B^2 C_{2}\\
    k_1 g_{0}C_{2}
\end{bmatrix}\\
0
\end{bmatrix}_{jk}
\end{aligned}
\end{equation}
where $k_1\triangleq (c_-\epsilon_2)/(4c_+\omega_{xy})$. We denote the competing control input ${u}^{\mathrm{opp}}_j\in\mathbb{R}^3$, which is constructed based on an opposing strategy and serves as a baseline for comparison:
\begin{equation}
\label{sub_networked_input_i}
\begin{aligned}
{u}^{\mathrm{opp}}_j=&\sum_{k\in\mathcal{N}_j}
\begin{bmatrix}
\frac{k_A}{2k_0}
\begin{bmatrix}
    \gamma_A[C_4]\\
    -[2C_1]+\frac{\gamma_A \epsilon_2}{k_A}[C_4]\\
\end{bmatrix}\\
-k_{z}{\omega_z}[C_{5}-C_{5d}]
\end{bmatrix}_{jk}\\
&+\sum_{k\in\mathcal{N}_j}
\begin{bmatrix}
\frac{c_-}{2k_0k_1^2} k_{B} s({\overline{r}}_{xy(2c_+,0,1,0)})
\begin{bmatrix}
     -k_1^2 C_{2}\\
    -k_1 \frac{\epsilon_2}{k_A}C_{2}
\end{bmatrix}\\
0
\end{bmatrix}_{jk}
\end{aligned}
\end{equation}
The behaviors of Eq.~(\ref{dot_C14}) applied ${u}^{\mathrm{main}}_j$ in Eq.~(\ref{networked_input_i}) and ${u}^{\mathrm{opp}}_j$ in Eq.~(\ref{sub_networked_input_i}) are shown in the next theorem. 
\begin{theorem}[Distance-based Orbital Stabilizer]
\label{theorem_networked_controller}
For $N=|\mathcal{V}|$ satellites, we assume that their connected graph $\mathcal{G}(\mathcal{V},\mathcal{E})$ and its incidence matrix $E\in\mathbb{R}^{N\times |\mathcal{E}|}$ are given and the controllable region is sufficiently large, i.e., $r_s\gg 1$. Then, ${u}^{\mathrm{main}}_j$ in Eq.~(\ref{networked_input_i}) is applied to the un-perturbed relative orbital dynamics in Eq.~(\ref{dot_C14}), where
\begin{equation}
    \label{coefficient_condition}
\left\{
\begin{aligned}
&\psi
\triangleq\frac{-\lambda_0 \pm \sqrt{\lambda_0^2-4\gamma_B^2}}{k_1/(1-k_1)},\quad \lambda_0\geq 2\gamma_B \geq 0\\ 
&f_{0}\triangleq\frac{\psi}{2}-\lambda_0,\quad
g_{0}
\triangleq
-k_1f_{0}-\lambda_0
\end{aligned}
\right.
\end{equation}
acheives $E^\top [C_4]\rightarrow 0$ and $E^\top U_x\rightarrow 0$ as $t\rightarrow \infty$ if the nonzero eigenvalues of the Laplacian satisfy $\lambda_0 \geq 0$
\begin{equation}
\label{definition_d}
\frac{\epsilon_2/k_A}{\lambda_0+2\gamma_B}I\preceq D_+ \preceq\frac{\epsilon_2/k_A}{\lambda_0-2\gamma_B}I.
\end{equation}
For $k_B=0$, the relative error of drift coefficient $E^\top[C_1]$ is exponentially decayed and $E^\top[C_4]\rightarrow 0$ as $E^\top[C_1]\rightarrow 0$ for arbitrary gains. Moreover, ${u}^{\mathrm{opp}}_j\in\mathbb{R}^3$ in Eq.~(\ref{sub_networked_input_i}) yields globally asymptotically stable results, i.e., $E^\top[C_{1,4}]\rightarrow 0$, $W_sE^\top[C_2]\rightarrow 0$ as $t\rightarrow\infty$, for arbitrary gains. For both inputs ${u}^{\mathrm{main}}_j$ and ${u}^{\mathrm{opp}}_j$, $\overline{r}_{xyjk}\rightarrow \text{const}$ as $t\rightarrow 0$, and it satisfies $W_sE^\top[C_2]=0$ for $k_B\neq 0$. Then, $(C_5,C_6)$ asymptotically converges into $(C_{5d},C_{6d})$ for unperturbed dynamics.
\end{theorem}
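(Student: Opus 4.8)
The plan is to recast the node dynamics of Eq.~(\ref{dot_C14}) in the edge coordinates $E^\top[C_i]$, since both control laws in Eqs.~(\ref{networked_input_i}) and (\ref{sub_networked_input_i}) are assembled from relative (neighbor) quantities and therefore factor through the incidence matrix $E$. Left-multiplying the closed loop by $E^\top$ and noting that each per-satellite law is a neighbor sum, the aggregate inputs satisfy $U_x=E\Phi_x$, $U_y=E\Phi_y$ for edge-indexed vectors $\Phi_{x,y}$, so the feedback enters as $E^\top U_x=L_{\mathsf e}\Phi_x$ with the edge Laplacian $L_{\mathsf e}=E^\top E$. I would then diagonalize using the eigendecomposition $L_{\mathsf e}=V_+D_+V_+^\top$ from Section~\ref{Algebraic_Graph_Theory}, so that in the modal coordinates $V_+^\top E^\top[C_i]$ the closed loop splits into one low-order subsystem per nonzero eigenvalue $\lambda_i$ of $D_+$. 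This reduces a stability question for $N$ coupled satellites to a family of scalar-parametrized subsystems indexed by $\lambda_i$, which is exactly the form in which the spectral interval Eq.~(\ref{definition_d}) can be read off (note $\lambda_i$ here denotes the edge-Laplacian modes, distinct from the gain $\lambda_0$).

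I would dispatch the two easy regimes first. For $k_B=0$ the distance term (the second sum in Eq.~(\ref{networked_input_i})) vanishes, the $C_1$ channel decouples, and the $-[2C_1]$ entry of $u_y$ makes each modal $C_1$ obey $\dot C_1=-\tfrac{1}{2}\lambda_i k_A C_1$, i.e.\ exponential decay for arbitrary gains; once $E^\top[C_1]\to 0$ the $C_4$ channel inherits a pure $\dot C_4=-\tfrac{1}{2}\lambda_i k_A\gamma_A\gamma_B^2 C_4$ decay, giving $E^\top[C_4]\to 0$. For the competing input $u^{\mathrm{opp}}$ of Eq.~(\ref{sub_networked_input_i}), the extra $\tfrac{\gamma_A\epsilon_2}{k_A}[C_4]$ and the modified distance coefficients are arranged so that the drift edge dynamics are Hurwitz for every $\lambda_i>0$ independently of the sign of the potential term; a single quadratic-plus-potential Lyapunov function then yields the stated global asymptotic stability $E^\top[C_{1,4}]\to 0$, $W_sE^\top[C_2]\to 0$ for arbitrary gains.

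The core case is $u^{\mathrm{main}}$ with $k_B\neq 0$. Here I would build $V=V_{\mathrm{drift}}+V_{\mathrm{dist}}$, where $V_{\mathrm{drift}}$ is a quadratic form in the in-plane edge states (a weighted combination of $E^\top[C_1]$, $E^\top[C_4]$ and the periodic pair $E^\top[C_{2,3}]$) and $V_{\mathrm{dist}}$ is an artificial potential, summed over edges, whose gradient in the edge states reproduces the $k_B\,s(\overline r_{xy})$ terms built from the distance function Eq.~(\ref{overline_r_xy}). Differentiating, the indefinite cross terms coupling the drift channel to the periodic/distance channel must be annihilated; this is precisely what forces the algebraic definitions of $\psi$, $f_0$ and $g_0$ in Eq.~(\ref{coefficient_condition}), which are chosen so the surviving quadratic form in $\dot V$ is block-diagonal in the modal coordinates. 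After cancellation, $\dot V$ reduces, mode by mode, to a scalar quadratic that is negative semidefinite exactly when $\lambda_i$ lies between the two bounds of Eq.~(\ref{definition_d}), whose feasibility requires $\lambda_0\geq 2\gamma_B$ as in Eq.~(\ref{coefficient_condition}). LaSalle's invariance principle then pins the trajectory to the largest invariant set on which $E^\top[C_4]=0$, $E^\top U_x=0$ and $W_sE^\top[C_2]=0$, delivering the convergence claims and, via the potential, $\overline r_{xy,jk}\to\text{const}$.

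The out-of-plane conclusion is comparatively routine: the $(C_5,C_6)$ block of Eq.~(\ref{dot_C14}) is a rotation driven only by $-k_z\omega_z[C_5-C_{5d}]$, so in edge coordinates it is a damped linear oscillator whose unique equilibrium is $(C_{5d},C_{6d})$, and asymptotic convergence there follows once the in-plane states have settled, since $C_{5d}$ is set hierarchically by $r_{xy},\theta_{xy}$ through Eq.~(\ref{r_xy_and_r_z}). I expect the main obstacle to be the cross-term cancellation in $\dot V$ for $u^{\mathrm{main}}$: the nonholonomic drift subsystem $(\dot C_1,\dot C_4)=(0,-\epsilon_2 C_1)$ admits no static linear stabilizer, so the Lyapunov function must simultaneously damp the drift and drive the potential, and the indefinite coupling disappears only through the precise gain relations of Eq.~(\ref{coefficient_condition}). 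Verifying that these relations make $\dot V$ sign-definite \emph{uniformly across all graph modes} — equivalently, extracting the spectral interval Eq.~(\ref{definition_d}) on $D_+$ — is the delicate step, with the nonlinearity of $s(\cdot)$ ruling out a purely eigenvalue-based shortcut and necessitating the LaSalle argument.
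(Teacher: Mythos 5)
Your proposal follows essentially the same route as the paper's proof: a Lyapunov function combining a quadratic form in the edge drift states $E^\top[C_1]$, $E^\top[C_4-\psi C_1]$ with an edge-summed artificial potential, cross-term cancellation via the gain identities of Eq.~(\ref{coefficient_condition}), reduction through the eigendecomposition $L_{\mathsf e}=V_+D_+V_+^\top$ to mode-wise negative semidefiniteness (the paper's $P_\pm=\gamma_B D_+\pm(\tfrac{\lambda_0}{2}D_+-\tfrac{\epsilon_2}{2k_A}I)\succeq 0$, which is exactly the spectral interval of Eq.~(\ref{definition_d})), and an invariance argument to extract $E^\top[C_4]\to 0$, $W_sE^\top[C_2]\to 0$, $\overline r_{xyjk}\to\mathrm{const}$, plus a separate scalar Lyapunov function for the $(C_5,C_6)$ channel. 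The only material divergence is the $k_B=0$ case, where you argue by cascade (exponential decay of $E^\top[C_1]$ feeding a stable $C_4$ subsystem) while the paper instead inserts a time-varying weight $\hat E\hat E^\top=\alpha(t)L$ with $\alpha(t)$ growing at the rate of the $\mathfrak e_1$ decay so that the indefinite $\dot V_{14}$ term is dominated; both reach the stated conditional conclusion.
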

\begin{proof}
See subsection \ref{proof_theorem_networked_controller} in the Appendix.
\end{proof}
\begin{remark}
When the minimum/maximum graph degree $\delta$/$\Delta$ are given, the obvious option of $\lambda_0$ and $\gamma_B$ in Eqs.~(\ref{coefficient_condition}) and (\ref{definition_d}) are as follows:
\begin{equation}
    \label{definition_d2}
\left\{
\begin{aligned}
\lambda_0&=\frac{\epsilon_2}{2k_A}\left(\frac{|\mathcal{V}|}{2\delta} +\frac{1}{2\Delta}\right)\\    
\gamma_B&=\frac{\epsilon_2}{4k_A}\left(\frac{|\mathcal{V}|}{2\delta}-\frac{1}{2\Delta}\right)
\end{aligned}
\right.
\Rightarrow
\left\{
\begin{aligned}
\frac{2\delta}{|\mathcal{V}|}&=\frac{\epsilon_2/k_A}{\lambda_0+2\gamma_B}\\    
2\Delta &=\frac{\epsilon_2/k_A}{\lambda_0-2\gamma_B}
\end{aligned}
\right.
\end{equation}
Since the nonzero eigenvalues of $L$ satisfies $\min_i[D_+]_{ii}\geq {2\delta}/{|\mathcal{V}|}$ and $\max_i[D_+]_{ii}\leq 2\Delta$, $\lambda_0$ and $\gamma_B$ in Eq.~(\ref{definition_d2}) guarantees the inequality in Eq.~(\ref{definition_d}).
\end{remark}
\noindent
To enforce the maximum graph degree $\Delta$ for satellite grouping, we next formulate the directed multi-leader graph with multi-leaders and followers.
\begin{definition}[Multi-Leader Grouping and Graph]
\label{multi_leader_graph}
For a given agent set $\mathcal{V}$, we assign its specific subset as the leader subset $\mathcal{V}_l\in\mathcal{V}$ and define the $l$th leader with followers as the “$l$th Local Group” $\mathfrak{g}_{l}$. Assuming all agents in $\mathfrak{g}_{l}$ interact with each other, an undirected graph $\mathcal{G}(\mathcal{V}, \mathcal{E})$ is defined, and the associated adjacency matrix is as follows:
$$
[A(\mathcal{G})]_{j k}=
\begin{cases}1
& \text { if }k,j\in\mathfrak{g}_{l}\quad \forall l\in\mathcal{V}_l\\
0 & \text { otherwise }\end{cases}.
$$
where the maximum number of other satellites' interactions for an arbitrary $\Delta$ is defined as follows: 
$$
\Delta=\max_j[\operatorname{Diag}(A(\mathcal{G}) \mathbf{1})]_{jj}
$$
We assume that $\overline{n}_{f\leftarrow l}$ indicates the maximum follower number for one local group and $\overline{n}_{l\leftarrow f}$ indicates the maximum following number for each satellite, which means that each satellite belongs to at most $(\overline{n}_{l\leftarrow f}+1)$. These bounds provide $\Delta$ as a function of $\overline{n}_{l\leftarrow f},\overline{n}_{f\leftarrow l}$ 
e.g. $\Delta(\overline{n}_{l\leftarrow f},\overline{n}_{f\leftarrow l})=(\overline{n}_{l\leftarrow f}+1)\overline{n}_{f\leftarrow l}\ $ if $\ \overline{n}_{l\leftarrow f}=\overline{n}_{f\leftarrow l}=2$.
$$
\sum_{l\in\mathcal{V}_l}\left( \sum_{k,j\in\mathfrak{g}_{l}}1\right)\triangleq \sum_{l\in\mathcal{V}_l}\frac{\Delta}{(\overline{n}_{l\leftarrow f}+1)}
=\Delta
$$
\end{definition}
\section{Performance Analysis: Satellite Swarm Array Deployment in $J_2$-Perturbed Orbit}
\label{Performance_Analysis}
This section utilizes numerical simulations to evaluate the developed distance-based orbital controller for 50 and 100 satellites in a $J_2$-perturbed Earth orbit. In these simulations, $k_A = 1.5e^{-2}$ is set based on the orbital time constant \cite{takahashi2025experimental}, the average distance $r_{\mathrm{avg}}$ is $0.5$, the controllable distance $r_s$ is $1$ m, and the maximum magnitude of force $\overline{f}$ is $5e^{-1}\mu$N. In addition, the satellite shape is cubic with a side length $1e^{-1}$ m, and the mass is 0.5 kg. Moreover, their initial orbital inclinations are identical at 51.7$^\circ$. The simulations in this study set $\overline{n}_{l\leftarrow f}=5$,\ $\overline{n}_{f\leftarrow l}=5$ in the definition ~\ref{multi_leader_graph} to avoid excessive interactions. For each simulation, one reference satellite $S/C_{\mathrm{ref}}$ is randomly chosen, and its position is set as the origin of the orbital frame $\{\mathcal{O}\}$. The initial orbital values $C_{2,3,5,6}$ of other satellites are randomly chosen as follows:
$$
|C_{2,3,5,6}|\leq \frac{r_s}{10}\quad\mathrm{s.t.}\quad\|\overline{p}_{(0)}-r_{\mathrm{o}(0)}\|^2 \leq 4r_{xy}^2+r_z^2\leq r_s^2.
$$
Additional orbit values $(C_{1,4},\Theta_{P,Z-XY})$ are set in each subsection.
Neighboring satellites are updated by the custom-made centralized grouping for a connected graph / subgraph in algorithm \ref{alg:Centralized_Multi_Leader_Grouping} in Appendix \ref{Appendices}. Any grouping algorithm can replace this, although we highly recommend guaranteeing that they converge to the time-invariant neighboring pairs for uniform formation. We execute the algorithm \ref{alg:Centralized_Multi_Leader_Grouping} based on the update scheduler. The adjacency matrix is updated at increasing intervals, specifically every 5, 10, 20, and 40 min, which correspond to the elapsed time of $T_{\mathrm{end}}/4$ for simulation time $T_{\mathrm{end}}$. Note that the grouping period at the initial phase should not coincide with the orbital period (approximately 100 minutes). Otherwise, the initial distorted formation may persist as the final configuration, which can degrade beamforming performance. Compared with terrestrial actuators, most spacecraft actuators exert reaction disturbance on neighbors due to linear momentum conservation. Since this reaction disturbance could be the main disturbance for ultra-close applications, we provide the convex optimization-based multi-leader grouping as well.
\subsection{Connectable Period Maximization Results}
\label{Connectable_Period_Maximization}
We first show the fault tolerance of ${u}^{\mathrm{main}}_j$ in Eq.~(\ref{networked_input_i}) to satellite connection losses using the $\hat{T}_{\mathrm{conn}}$ index. We compare the convergence performance result on the $C_4$-$C_1$ plane for the two closed loop systems applied the input ${u}^{\mathrm{main}}_j$ and ${u}^{\mathrm{opp}}_j$ in Eqs.~(\ref{networked_input_i},\ref{sub_networked_input_i}) with fixed grouping. Based on the results of theorem~\ref{theorem_networked_controller}, we chosen gain values for $k_B=0$: $f_{0}=-{\epsilon_2}/{(2k_A)}$, $\psi=0$, $\gamma_A=\gamma_B=1$, $g_{0}=0$ to make the control input order as similar as possible. We then set the initial orbital parameters for each term of $[C_4]$ as $|C_4|\leq 2.5r_s$ and two cases $C_1$ as follows:
$$
\text{Large $C_1$: }|C_1|\leq r_s/2,\quad \text{Small $C_1$: }|C_1|\leq (r_s/2)/20.
$$
The history $\hat{T}_{\mathrm{conn}}$ from the reference satellite S/C$_{\mathrm{ref}}$ in the closed-loop systems are summarized in Fig.~\ref{Relaxed_connectable_period_for_two_cases}. $C_{1,4}$ is converged into the origin for both cases, although the value of $\hat{T}_{\mathrm{conn}}(C_{1},C_4)$ differed for each result. For the case of the large $C_1$ in Fig.~\ref{Relaxed_connectable_period_for_two_cases} a), the two control trajectories are almost identical; however, ${u}^{\mathrm{main}}_j$ decreases with a consistent sign, while some trajectories by ${u}^{\mathrm{opp}}_j$ show a tendency for $C_1$ to increase. This tendency is especially evident in the case of small $C_1$ values shown in Fig.~\ref{Relaxed_connectable_period_for_two_cases} b). According to Eq.~(\ref{dot_C14}), the dynamics of $C_4$ with small $C_1$ approach a pure integrator system; however, ${u}^{\mathrm{opp}}_j$ increases $C_1$, leading to a trajectory where $T$ becomes large. Moreover, in the $u_{\max}$ plot, $C_1$ converges to the origin without increasing, and the trajectory changes in the direction of increasing $\hat{T}_{\mathrm{conn}}$. This is because the $x$-component of $\hat{T}_{\mathrm{conn}}$ includes cross terms of $C_1$ and $C_4$, which increases $\hat{T}_{\mathrm{conn}}$ as shown in Fig.~\ref{Relaxed_connectable_period_for_two_cases} c). From the definition, $\hat{T}_{\mathrm{conn}}(C_{1},C_4)$ suggests the escape time when relative orbital center escape from the controllable region of observer satellite: when satellites 1) implement each input from initial time until an arbitrary time $t$ and 2) show free motion for some unintended reasons from time $t$ and escape at $t+\hat{T}_{\mathrm{conn}}$. Therefore, the $u_{\mathrm{main}}$ trajectory yields more fault tolerance results relative to satellite connection losses under the distance constraint of $r_s$.
\begin{figure}[tb!]
\centering
\begin{minipage}[b]{1\singlecolumnwidth}
    \centering
\includegraphics[width=1.075\columnwidth]{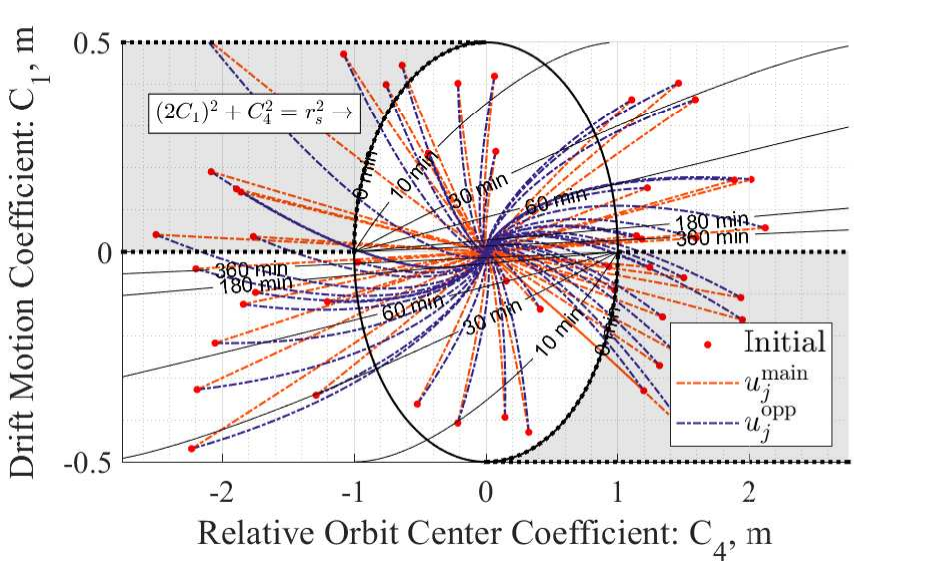}
    \subcaption{Large $C_1$ conditions: Connectable period transition.}
  \end{minipage}\\
  \begin{minipage}[b]{1\singlecolumnwidth}
    \centering
\includegraphics[width=1.075\columnwidth]{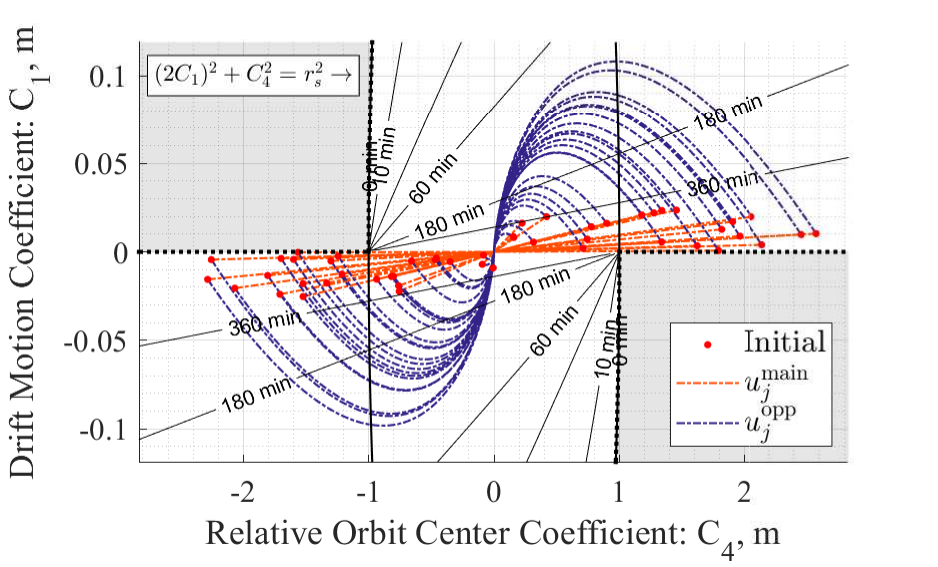}
    \subcaption{Small $C_1$ conditions: Connectable period transition.}
  \end{minipage}\\
  \begin{minipage}[b]{1\singlecolumnwidth}
    \centering
      \hspace*{-0.25cm}%
\includegraphics[angle=90, width=1.1\columnwidth]{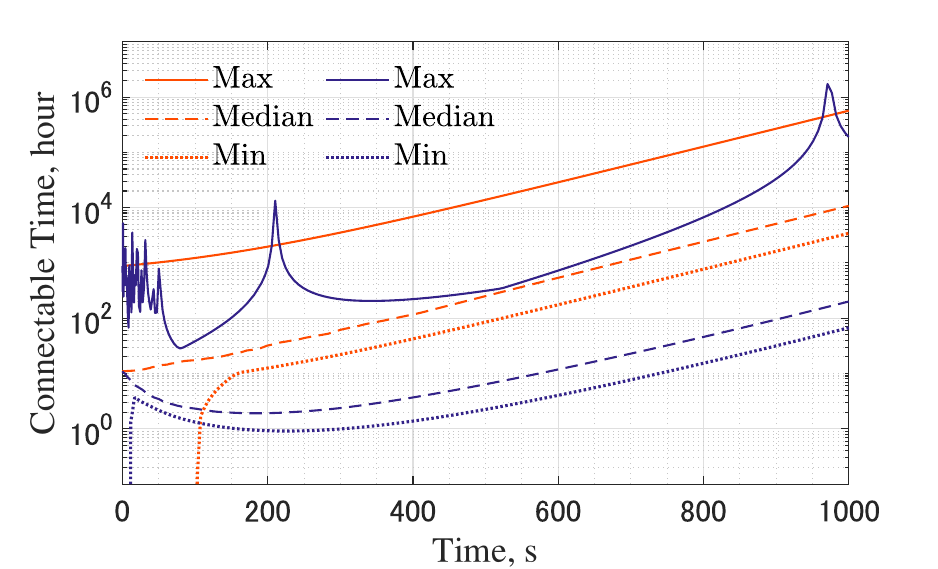}
    \subcaption{Small $C_1$ conditions: Connectable period history.}
\label{Small_C_1_conditions_Connectable_period_history}
  \end{minipage}
  \caption{Relaxed connectable period transition 
  in Eq.~(\ref{T_conn})
  for 50 satellites in subsection~\ref{Connectable_Period_Maximization}. 
Figure~\ref{Small_C_1_conditions_Connectable_period_history} shows their maximum, median, and minimum values.
  }  \label{Relaxed_connectable_period_for_two_cases}
\end{figure}
\subsection{Two Formation Control Results into Different Planes}
\label{Networked_Control_Results_with_Constant_Gains}
We finally validated the distance-based formation control results of a 100-satellite swarm in two types of orbital planes. The proposed controller ${u}^{\mathrm{main}}_j$ in Eq.~(\ref{networked_input_i}) is applied to $J_2$ perturbed orbital dynamics in Eq.~(\ref{J2 orbital dynamics}) to avoid linearization error. To initialize $r_{\mathrm{o}(0)}$ within the connectable region of $S/C_{\mathrm{ref}}$, the initial orbital parameters from $S/C_{\mathrm{ref}}$ are determined as relatively dense initial conditions as follows:
$$
|C_1|\leq r_s/2,\quad |C_4|\leq r_s.
$$  
Almost all $r_{\mathrm{o}}(0)$ are randomly initialized within the connectable region of $S/C_{\mathrm{ref}}$, i.e., $r_{\mathrm{o}}^2=(2C_1)^2+C_4^2\leq r_s^2$. Note that this condition allows some satellites to exist outside this region, or the communication period is limited to relatively short periods. The remaining orbital plane information of swarm plane $\Phi(\Theta_{P},\Theta_{z-xy})$ and following $r_{xyd}$ are as follows:
$$
\begin{aligned}
\angle\Theta_{1}\mathrm{: }&&\Theta_P = 30^\circ,\ \Theta_{Z-XY} = 00^\circ\ &\Rightarrow r_{xyd}\approx \frac{r_{\mathrm{avg}}}{2+4.00e^{-5}}\\
\angle\Theta_{2}\mathrm{: }&&\Theta_P = 40^\circ,\ \Theta_{Z-XY} = 50^\circ\ &\Rightarrow  r_{xyd}\approx \frac{r_{\mathrm{avg}}}{2+1.12e^{-1}}
\end{aligned}
$$ 
First, $\Theta_{P,Z-XY}$ realizes almost constant relative distances of the stable formation as shown in subsection \ref{J2_Relative_Orbital_Parameters}.

The control results by ${u}^{\mathrm{main}}_j$ in Eq.~(\ref{networked_input_i}) are summarized in Figs.~\ref{Networked:Formation_shape} and \ref{Networked:Formation_Orbital_Parameters}. Fig.~\ref{Networked:Formation_shape} shows the initial or final formation shape for $\angle\Theta_{1}$ case in (a,b) and $\angle\Theta_{2}$ case in (c,d). The red line indicates each satellite's position, which represents its attitude, and the black dotted line indicates mutual interactions. The five dotted circles indicate a user-defined plane $\Phi(\Theta_{P},\Theta_{z-xy})$. These confirm convergence into a user-defined plane before $t=120h$ under time-varying $J_2$ geopotential gravity effects. 

Fig.~\ref{Networked:Formation_Orbital_Parameters} shows the averaged $J_2$ orbital parameter history of 99 satellites with respect to the S/C$_{\mathrm{ref}}$ for $\angle\Theta_{1}$ case. The results shown in Fig.~\ref{Networked:Formation_Orbital_Parameters} a) are related to the first target type and drift elimination derived in subsection~\ref{swarm_plane}, and they summarize the connectable period transition. These findings show that all $C_{1,4}$ converge to the origin, which means that concentric orbits are formed. The index $\hat{T}_{\mathrm{conn}}$ provides the satellite deployment behavior for limited $\overline{f}$. All $C_1$ tends to decay exponentially, although $C_4$ with $C_1C_4\leq 0$ tends to diverge in the initial period. This supports their hierarchical relationship in Eq.~(\ref{dot_C14}), and the exponential stability of $C_4$ is guaranteed only for sufficiently small $C_1$ under limited input. Lower gain $k_A$ also causes non-convergence due to the initial $C_4$ divergence. The results of Fig.~\ref {Networked:Formation_Orbital_Parameters} b) are related to the second and third target types, namely, orbit plane and distance control derived in subsection~\ref{swarm_plane}. The orbital distance $r_{xy,z}$ converges to a constant around the desired orbital distance that satisfies $W_sE^\top[C_2]=0$, as proven in theorem~\ref{theorem_networked_controller}. Moreover, orbital angles $\Theta_{P,Z-XY}$ converge into the desired angles based on the hierarchical target strategy in subsection~\ref{swarm_plane}.
\begin{figure}[tb!]
\centering
  \begin{minipage}[b]{0.48\singlecolumnwidth}
    \centering
\includegraphics[width=1.05\columnwidth]{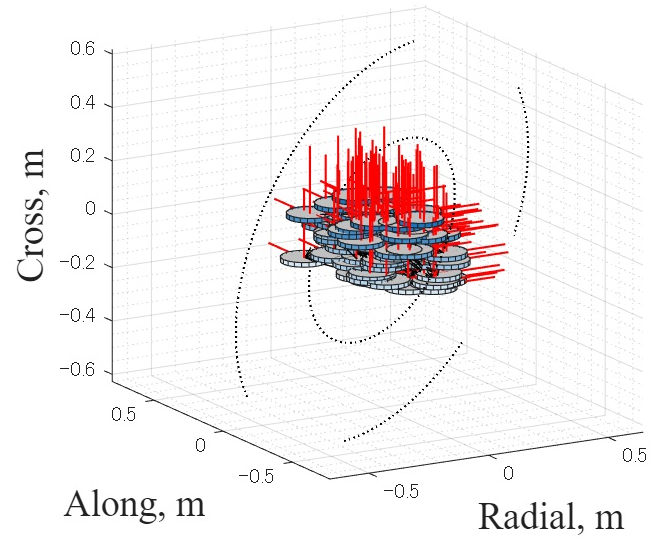}
    \subcaption{Initial states: $\angle\Theta_{1}$.}
  \end{minipage}
  \begin{minipage}[b]{0.48\singlecolumnwidth}
    \centering
    \includegraphics[width=1.05\columnwidth]{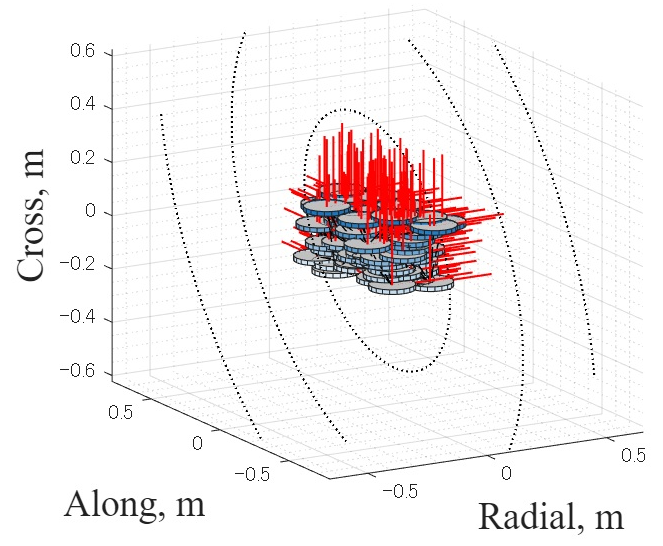}
    \subcaption{Initial states: $\angle\Theta_{2}$.}
  \end{minipage}\\
  \begin{minipage}[b]{0.5025\singlecolumnwidth}
    \centering
\includegraphics[width=1.05\columnwidth]{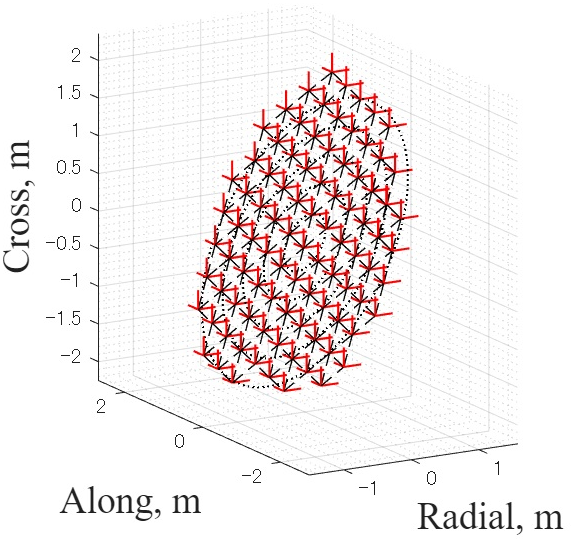}
    \subcaption{LVLH states (120h): $\angle\Theta_{1}$.}
  \end{minipage}
  \begin{minipage}[b]{0.4675\singlecolumnwidth}
    \centering
\includegraphics[width=1.05\columnwidth]{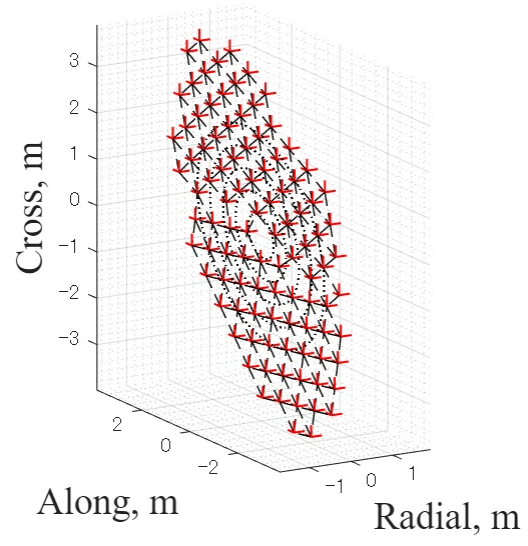}
    \subcaption{LVLH states (120h): $\angle\Theta_{2}$.}
  \end{minipage}\\
  \begin{minipage}[b]{0.485\singlecolumnwidth}
    \centering
\includegraphics[width=1.025\columnwidth]{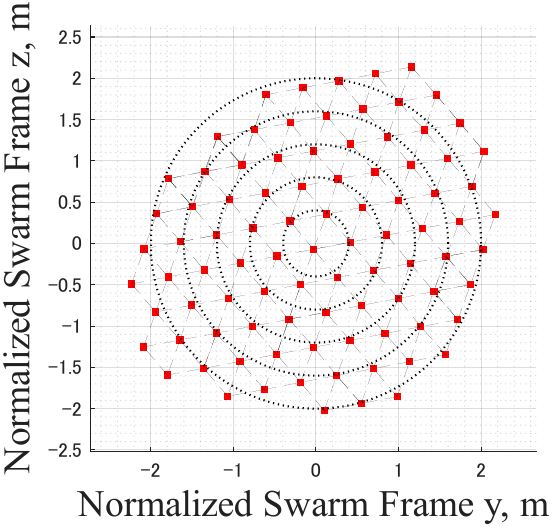}
    \subcaption{$\{\hat{S}\}$ states (120h): $\angle\Theta_{1}$.}
  \end{minipage}
  \begin{minipage}[b]{0.485\singlecolumnwidth}
    \centering
    \includegraphics[width=1.025\columnwidth]{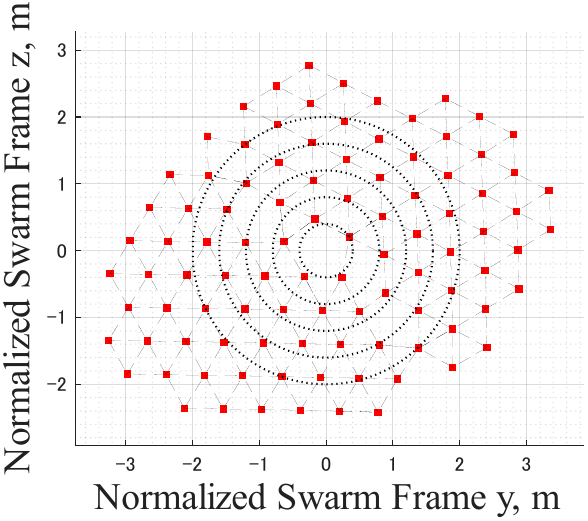}
    \subcaption{$\{\hat{S}\}$ states (120h): $\angle\Theta_{2}$.}
  \end{minipage}
  \caption{One-hundred satellite formation control results at orbital frames $\{\mathcal{O}\}$ and $\{\hat{S}\}$ in subsection~ \ref{Networked_Control_Results_with_Constant_Gains}.}
  \label{Networked:Formation_shape}
\end{figure}
\begin{figure}[tb!]
\centering
  \begin{minipage}{\singlecolumnwidth}
    \centering
    \hspace*{-0.02\columnwidth} 
\includegraphics[width=1.085\columnwidth]{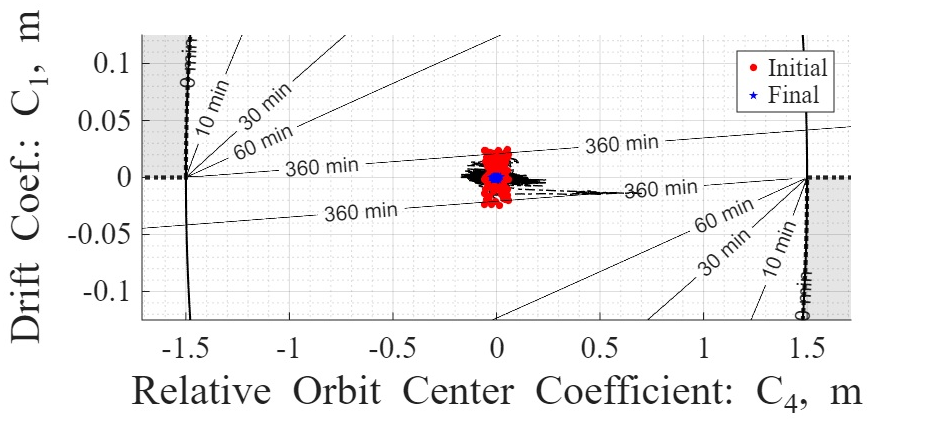}
    \subcaption{Connectable period transition of 99 satellites.}
  \end{minipage}\\
   \begin{minipage}{\singlecolumnwidth}
    \centering
    \hspace*{-0.05\columnwidth} 
\includegraphics[width=1.125\columnwidth]{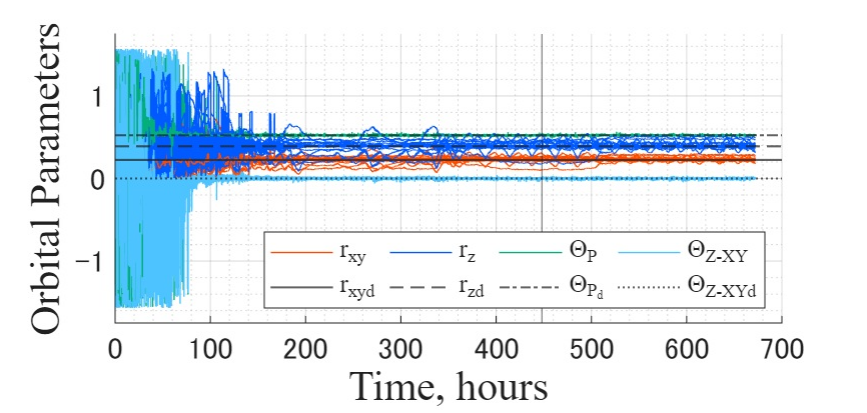}
    \subcaption{Averaged $J_2$ orbital parameters for the reference satellite}
  \end{minipage}
  \caption{Averaged $J_2$ orbital parameters in Eq.~(\ref{definition_C}) from the reference for 100 satellites under time-varying $J_2$ geopotential gravity effects in subsection~\ref{Networked_Control_Results_with_Constant_Gains}: Formation shape of the orbital frame $\{\mathcal{O}\}$.}
  \label{Networked:Formation_Orbital_Parameters}
\end{figure}
\section{Conclusion}
\label{Conclusion}
This study investigated a formation control strategy for achieving large-scale and scalable space structures using a satellite swarm. We primarily considered fuel-free actuation, such as magnetic field interaction and differential aerodynamic drag, and derived averaged $J_2$ orbital parameters to attenuate the drift motion and control the periodic motion. This provided a decentralized deployment controller in a user-defined orbital plane at a specific distance. Since fuel-free actuation is typically satellite distance-dependent, i.e., the control performance degrades for large separations, our controller minimizes drift during unexpected communication loss under unstable orbital dynamics. These results lower the technical barriers of large-scale space systems, thereby improving system performance in terms of resolution, communication speed, and effective isotropic radiated power.
\section*{Appendices}
\label{Appendices}
\subsection{Escape, Connectable, and Relaxed Connectable Time} 
\label{Escape_Connectable_Time_Relaxation}
We derived the representative times for a deputy satellite to exit from a controllable region for controller design. Based on the analytical solutions in subsection \ref{Averaged_J_2_Relative_Orbital_Parameters}, we provide the escape time $T_{\mathrm{out}}\in\mathbb{R}$, connectable time $T_{\mathrm{conn}}$, and “relaxed” connectable time $\hat{T}_{\mathrm{conn}}(C_{1,4})$. We first denote the strict definition of "escape time" and "connectable time" under the distance-based constraint.
\begin{definition}
For parameters $C_{1-6}(t_0)$ of a given relative orbit at initial time $t_0$, time $t_s=t_0+{T}_{\mathrm{out}}(C_{1-6})$ corresponds to the time when the distance between two satellites coincides with $r_s$, where
$$
{T}_{\mathrm{out}}(C_{1-6})=\left\{t_s \geq t_0\ |\ \|r(t_s)\|^2 = r_s^2 \right\}-t_0
$$
and $r(t)=[x(t),y(t),z(t)]^\top$ of the closed-form solutions in Eq.~(\ref{CWsol}). Here, the initial escape time $\underline{T}_{\mathrm{out}}(C_{1-6})=\min({T}_{\mathrm{out}})$ and final escape time $\overline{T}_{\mathrm{out}}(C_{1-6})=\max({T}_{\mathrm{out}})$ are estimated at initial time $t_0$ through the closed-form solutions as follows:
$$
\begin{aligned}
\underline{T}_{\mathrm{out}}(C_{1-6})&=\mathrm{min}\left\{t \geq t_0\ |\ \|r(t)\|^2 = r_s^2 \right\}-t_0\\
\overline{T}_{\mathrm{out}}(C_{1-6})&=\mathrm{max}\left\{t \geq t_0\ |\ \|r(t)\|^2 = r_s^2 \right\}-t_0
\end{aligned}
$$
\end{definition}
\begin{definition}
Connectable time $T_{\mathrm{conn}}(C_{1-6})$ is determined as follows:
$$
T_{\mathrm{conn}}(C_{1-6})=\sum\left\{t_0\leq t \leq \overline{T}_{\mathrm{out}}(C_{1-6})\ |\ \|r(t)\|^2 \leq r_s^2 \right\}.\\
$$
which satisfies $\underline{T}_{\mathrm{out}}\leq T_{\mathrm{conn}}\leq \overline{T}_{\mathrm{out}}$ since $\overline{T}_{\mathrm{out}}$ includes the duration for the deputy satellite to escape from the controllable region.
\end{definition} 

One drawback of these strict estimations is the computational cost associated with some numerical techniques, which can be prohibitive for online execution by small satellites. Therefore, we derive the lower bound $\hat{\underline{T}}_{\mathrm{out}}(C_{1,4},r_{xy,z})$, such that $\hat{\underline{T}}_{\mathrm{out}}(C_{1,4},r_{xy,z})\leq {\underline{T}}_{\mathrm{out}}(C_{1-6})$. We consider the following ellipse as a virtual relative trajectory $r_v(C_{1-6},t_k,\theta)$:
$$
r_v(C_{1-6},t_k,\theta)=r_{\mathrm{o}}(t_k)+
\begin{bmatrix}
   r_{xy}\sin{(\theta + \theta_{xy})}/c_+\\
    2r_{xy}\cos{(\theta + \theta_{xy})}/c_-\\
    r_z \sin{(\theta+\theta_z)}
\end{bmatrix}
$$
where the fixed time is $t_k$, arbitrary phase is $\theta \in[0,2\pi)$, and origin of the ellipse is $r_{\mathrm{o}}(t)$ in Eq.~(\ref{CWsol}). We define the coordinate transformation $C^{E/O}$ such that the ellipse lies on the $X-Y$ plane, with its minor axis aligned along the $X$-axis and the origin of the ellipse $[X_o;Y_o;Z_o]=C^{E/O}r_{\mathrm{o}}(t_k)$ as introduced in \cite{chang2011novel}. Here, to derive $\hat{\underline{T}}_{\mathrm{out}}(C_{1,4},r_{xy,z})$, we consider that the inter-satellite distance takes the maximum value. This occurs when $[1,{dY}/{dX}]^\top$ and $[1,{dY_o}/{dX_o}]^\top$ intersect orthogonally, where ${dY}/{dX}$ is the gradient of the tangent at $[X;Y]$ on the ellipse and ${dY_o}/{dX_o}$ is the vectors that joins the origin and the arbitrary point on the ellipse:
$$
\frac{dY}{dX}=-\frac{X-X_o(t_k)}{Y-Y_o(t_k)} \frac{l_\mathrm{max}^2}{l_\mathrm{min}^2},\quad\frac{dY_o}{dX_o}=\frac{Y}{X}.
$$
Therefore, we can derive the constraint of $\hat{\underline{T}}_{\mathrm{out}}$:
\begin{equation*}
\label{T_out}
\hat{\underline{T}}_{\mathrm{out}}(C_{1,4},r_{xy,z})=\mathrm{min}\left\{t \geq t_0\ \left|\
\begin{aligned}
&X^2+Y^2 = r_s^2\\
&1+\frac{dY}{dX}\frac{dY_o}{dX_o}=0
\end{aligned}
\right\}\right.
\end{equation*}
This can be simplified to a cubic equation whose solutions are up to four actual points and could be provided analytically. Since this $\hat{\underline{T}}_{\mathrm{out}}(C_{1,4},r_{xy,z})$ still presents high computational costs, we define a relaxed time function in Eq.~(\ref{T_conn}).
\begin{lemma}
For a given $C_{1,4}$, consider an arbitrary $r_{s0}\in\mathbb{R}_+$ such that $r_{s0}^2\geq (2C_1)^2$ and an associated $\hat{T}_{0}(r_{s0})\geq0$. Then, the following is satisfied for $\delta C_{4}=(C_4-3\omega C_1\hat{T}_{0})$:
\begin{equation*}
\label{bound_Tconn2}
\begin{aligned}
\hat{T}_{\mathrm{conn}}(C_{1,4})
&=\hat{T}_0(r_{s0})+\frac{-|\delta C_4|+\sqrt{r_s^2-(2C_1)^2}}{\epsilon_2|C_1|}\\
\end{aligned}
\end{equation*}
\end{lemma}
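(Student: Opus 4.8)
The plan is to reduce the claim to the elementary one-dimensional geometry of the orbit-center trajectory $r_{\mathrm{o}}(0,t)$ from Eq.~(\ref{CWsol}) and to reuse the quadratic already solved in the definition of $\hat{T}_{\mathrm{conn}}$, but re-anchored at the intermediate time $\hat{T}_0$ rather than at $t=0$. Writing $\eta(t)\triangleq C_4-\epsilon_2 C_1 t$ for the along-track center coordinate (here $3\omega C_1$ denotes the center drift rate, which I identify with $\epsilon_2 C_1$ and which reduces to $3\omega_{xy}C_1$ in the unperturbed limit $s_{J_2}\to 0$), Eq.~(\ref{CWsol}) gives $\|r_{\mathrm{o}}(0,t)\|^2=(2C_1)^2+\eta(t)^2$, an upward parabola in $t$ whose radial part $2C_1$ is constant and whose along-track part $\eta$ is affine with slope $-\epsilon_2 C_1$.

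First I would evaluate at $t=\hat{T}_0$. By the defining property of $\hat{T}_0(r_{s0})$, namely $\|r_{\mathrm{o}}(0,\hat{T}_0)\|^2=r_{s0}^2$, the residual coordinate $\delta C_4=\eta(\hat{T}_0)=C_4-\epsilon_2 C_1\hat{T}_0$ satisfies $\delta C_4^2=r_{s0}^2-(2C_1)^2$. The hypothesis $r_{s0}^2\geq(2C_1)^2$ makes this well defined and yields $|\delta C_4|=\sqrt{r_{s0}^2-(2C_1)^2}$. This converts the intermediate crossing time into a boundary datum and is precisely what allows the final formula to be written incrementally.

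Next I would set $\hat{T}_{\mathrm{conn}}=\hat{T}_0+\Delta t$ and impose the escape condition $\|r_{\mathrm{o}}(0,\hat{T}_{\mathrm{conn}})\|^2=r_s^2$. Using $\eta(\hat{T}_{\mathrm{conn}})=\delta C_4-\epsilon_2 C_1\Delta t$, this becomes $(\delta C_4-\epsilon_2 C_1\Delta t)^2=r_s^2-(2C_1)^2$, a quadratic in $\Delta t$ whose two roots mirror the $\pm$ in Eq.~(\ref{T_conn}). The decisive step is to pin down which root is the genuine future escape: since $\hat{T}_{\mathrm{conn}}$ is the outward crossing after $\hat{T}_0$, the center must be moving away from the origin on $[\hat{T}_0,\hat{T}_{\mathrm{conn}}]$, i.e.\ $\eta\,\dot{\eta}=-\epsilon_2 C_1\eta>0$, which forces $\operatorname{sgn}(\eta(\hat{T}_{\mathrm{conn}}))=\operatorname{sgn}(\delta C_4)=-\operatorname{sgn}(C_1)$. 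Substituting $\eta(\hat{T}_{\mathrm{conn}})=\operatorname{sgn}(\delta C_4)\sqrt{r_s^2-(2C_1)^2}$, solving for $\Delta t$, and using $\operatorname{sgn}(\delta C_4)/C_1=-1/|C_1|$ together with $|\delta C_4|=\sqrt{r_{s0}^2-(2C_1)^2}$ collapses the expression to $\Delta t=(-|\delta C_4|+\sqrt{r_s^2-(2C_1)^2})/(\epsilon_2|C_1|)$; adding $\hat{T}_0$ gives the claim. A final check that $\Delta t\geq 0$ follows from $r_s\geq r_{s0}$, consistent with the outer $\max(\cdot,0)$ in the definition.

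The main obstacle I anticipate is exactly this sign and branch bookkeeping: the definition of $\hat{T}_{\mathrm{conn}}$ carries a $\pm$ and an outer maximum, and the clean absolute-value form in the lemma only emerges once one argues that outward (escaping) motion selects the branch with $\operatorname{sgn}(\eta)=-\operatorname{sgn}(C_1)$, which is also the larger of the two roots picked by the $\max$ in Eq.~(\ref{T_conn}). I would therefore concentrate the care on verifying this monotonicity/sign claim and the degenerate edge cases --- $C_1\to 0$, where $\hat{T}_{\mathrm{conn}}\to\infty$ as noted after Eq.~(\ref{T_conn}), and $r_{s0}^2=(2C_1)^2$, where $\delta C_4=0$ and $\hat{T}_0$ coincides with the vertex of the parabola. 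The remaining algebra is routine.
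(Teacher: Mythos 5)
Your proposal is correct and follows essentially the same route as the paper: the paper's (very terse) proof likewise hinges on establishing $\mathrm{sgn}(C_1\,\delta C_4)\leq 0$ from the definition of $\hat{T}_{\mathrm{conn}}$ and then re-solving the crossing condition anchored at $\hat{T}_0$, which is exactly your branch-selection argument via the monotone drift of the orbit center. Your reading of $3\omega$ as the drift rate $\epsilon_2 C_1$ (reducing to $3\omega_{xy}C_1$ as $s_{J_2}\to 0$) matches the paper's usage, so no substantive gap remains.
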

\begin{proof}
Substituting Eq.~(\ref{T_conn}) into $\mathrm{sgn}(C_1\delta C_4(\hat{T}_0- \hat{T}_{\mathrm{conn}}))$ yields $-\mathrm{sgn}( \hat{T}_{0}- \hat{T}_{\mathrm{conn}})$; thus, $\mathrm{sgn}(C_1\delta C_4)\leq 0$. Applying this and $\delta C_4$ for Eq.~(\ref{T_conn}) results in Eq.~(\ref{bound_Tconn2}). 
\end{proof}
\subsection{Centralized Grouping on a Normalized Swarm Frame}
\label{two_networking_alg}
We present the criteria for the satellite grouping method based on the Normalized Swarm Frame and derive a multi-leader network $\mathcal{D}^{\mathfrak{ml}}\left(\mathcal{V}, \mathcal{E}^{\mathfrak{ml}}\right)$. To reduce computational costs, we assume that the neighboring satellites $\mathcal{N}_j$ are updated mainly by the closeness of the relative distance and remove overlapping interactions. Then, we limit $\mathcal{N}_j$ to inclusion in the Delaunay triangulation (DT) graph \cite{lee1980two,schwab2021distributed,guinaldo2024distributed}, where the edges form triangles, and the circumcircle contains no points in $\mathcal{V}$ in its interior. Here, updating DT based on the Euclid distance leads to the non-constant $\mathcal{N}_j$ since even the stable state $p_d(t)$ is periodic, as shown in Eq.~(\ref{desired_position}). Then, DT is derived by $p_d(t)$ in the $\{\hat{\mathcal{S}}\}$ frame to obtain the time-invariant $\mathcal{N}$, which leads to the convergence of the distance-based formation. In this study, we calculate DT in a centralized manner. For comparison, please refer to the decentralized DT estimation \cite{schwab2021distributed} and 3D surfaces \cite{guinaldo2024distributed} based on a given initial DT. Since the number of neighbor candidates can be larger than $\Delta$, we sort them by $\hat{T}_{\mathrm{conn}}$ and pair them with candidates presenting a small $\hat{T}_{\mathrm{conn}}$. These candidates are considered unsafe to ensure a certain amount of communication time.   

The specific centralized groping algorithm \ref{alg:Centralized_Multi_Leader_Grouping} is used to derive a multi-leader network $\mathcal{D}^{\mathfrak{ml}}\left(\mathcal{V}, \mathcal{E}^{\mathfrak{ml}}\right)$, which includes leader agent set $\mathcal{V}_l$ and edges for $j$th group $\mathcal{E}_j^{\mathfrak{ml}}$. Again, this will be replaced by an arbitrary method. We can achieve multi-leader selection through the optimization problem, such as \cite{clark2012leader}. Certain crucial points of the grouping algorithm are noted, which can ensure the stability of the distance-based control. First, multiple groups do not share a common edge for stable control allocation, i.e., the control of each edge should be realized within a single group. Second, $\overline{n}_f$ should be greater than three because the edge number of a DT is $[3(|\mathcal{V}|-1)-e(t)]$ \cite{lee1980two}, where $e(t)$ is the number of vertices on the boundary $e(t)\in[3, N]$. Third, the outer edges of the distance between neighboring satellites become non-uniform. Then, removing these connections at the outer edges of the DT under certain conditions may result in a uniform overall formation. Fourth, the edges of $\mathcal{N}$ should not cross each other to guarantee graph rigidity for convergence stability.

To ensure that the maximum value of the neighbor number remains less than or equal to $\Delta$, we define multi-leader digraph $\mathcal{D}^{\mathfrak{ml}}(\mathcal{V}, \mathcal{E}^{\mathfrak{ml}})$ to derive $\mathfrak{g}_{l}$ and edge $(j, k)\in\mathcal{E}^{\mathfrak{ml}}$ describes the hierarchical relationships of $k\in\mathcal{V}_l$ and $j\in\mathcal{V}/k$, i.e., the $j$-th satellite follows $k$-th satellite.
\begin{definition}[Multi-Leader Digraph]
\label{multi_leader_graph}
For given $\mathfrak{g}_{l}$, we define the multi-leader digraph $\mathcal{D}^{\mathfrak{ml}}(\mathcal{V}, \mathcal{E}^{\mathfrak{ml}})$ and edge $(j, k)\in\mathcal{E}^{\mathfrak{ml}}$ describes the hierarchical relationships of $k\in\mathcal{V}_l$ and $j\in\mathcal{V}/k$, i.e., the $j$-th satellite follows $k$-th one.
$$
\mathfrak{g}_{l}=\{l, f\in \mathcal{V}\  |\ (f, l)\in\mathcal{E}_l^{\mathfrak{ml}} \},\quad\ \mathcal{E}^{\mathfrak{ml}}=\mathcal{E}_1^{\mathfrak{ml}} \cap
\ldots \cap \mathcal{E}_{N}^{\mathfrak{ml}}
$$
where $\mathfrak{g}_{l}=l$ and $\mathcal{E}_{l}^{\mathfrak{ml}}=\{\}$ means that the $l$th leader does not have followers. Its non-weighted, non-symmetric adjacency matrix $A(\mathcal{D}^{\mathfrak{ml}})$ is defined as follows:
$$
[A(\mathcal{D}^{\mathfrak{ml}})]_{j k}=\begin{cases}1
& \text { if }(k, j) \in \mathcal{E}^{\mathfrak{ml}} \\ 0 & \text { otherwise }\end{cases}
$$
Here, the sum of each $j$th row $\sum_{k=1}^{n} [A(\mathcal{D}^{\mathfrak{ml}})]_{jk}$ and $k$th column $\sum_{j=1}^{n} [A(\mathcal{D}^{\mathfrak{ml}})]_{jk}$ indicate the number of $j$th followers and $k$-th following, respectively:
$$
\sum_{k=1}^{n} [A(\mathcal{D}^{\mathfrak{ml}})]_{jk}\leq \overline{n}_{f\leftarrow l},\quad \sum_{j=1}^{n} [A(\mathcal{D}^{\mathfrak{ml}})]_{jk}\leq  \overline{n}_{l\leftarrow f}
$$
These bounds include $\Delta$ as a function of $\overline{n}_{l\leftarrow f},\overline{n}_{f\leftarrow l}$ 
e.g. $\Delta(\overline{n}_{l\leftarrow f},\overline{n}_{f\leftarrow l})=(\overline{n}_{l\leftarrow f}+1)\overline{n}_{f\leftarrow l}\ $ if $\ \overline{n}_{l\leftarrow f}=\overline{n}_{f\leftarrow l}=2$.
\end{definition}
\begin{algorithm}[tb!]
\caption{Centralized Multi-Leader Grouping for Connected Graphs or Each Connected Subgraph}
\label{alg:Centralized_Multi_Leader_Grouping}
\begin{algorithmic}[1] 
\State \textbf{Inputs: }1) $\mathcal{V}=\{1,\ldots ,N\}$, 2) ${r}_j^{\hat{s}}\in\mathbb{R}^3$ by Eq.~(\ref{conversion_for_grouping}), 3) $r_s\in\mathbb{R}$, 4) $\overline{n}_f\in\mathbb{R}$, 5) $\mathcal{V}_l=\{\}$, $\mathcal{E}_j^{\mathfrak{ml}}=\{\}$
\State \textbf{Outputs: }$\mathcal{D}^{\mathfrak{ml}}\left(\mathcal{V}, \mathcal{E}^{\mathfrak{ml}}\right)$, $\mathcal{V}_l$, $\mathcal{E}_l^{\mathfrak{ml}}$
\State Extract $\Phi(\Theta_{P},\Theta_{z-xy})$ component, ${r}_{j\mathrm{yz}}^{\hat{s}}={r}_j^{\hat{s}}(2:3)$
\State Use ${r}_{j\mathrm{yz}}^{\hat{s}}$ to make adjacency matrix $A_{\mathcal{DT}}$ and its degree matrix $D_{\mathcal{DT}}$ from Delaunay triangulation 
\State Derive index $k_{\mathrm{deg}}$ in order of degree of $D_{\mathcal{DT}}$ 
\For{$k_l=k_{\mathrm{deg}}(m_1\in[1\dots N])$}
\State Set $\mathfrak{g}_{k_l}=k_l$, add $k_l$ into $\mathcal{V}_l$, and derive index $k_{\hat{T}_{\mathrm{conn}}}$ in order of $\hat{T}_{\mathrm{conn}}^{-1}$ for $k_l$th agent
\For{$k_f=k_{\hat{T}_{\mathrm{conn}}}(m_2\in[1\dots N])$}
\If{$A_{\mathcal{DT}}(k_l,k_f)$ $\&\&$ $\|\bm{r}_{k_l}-\bm{r}_{k_f}\|\leq r_s$}
\State Add $k_f$ into $\mathrm{g}_{k_l}$
\If{size of $\mathfrak{g}_{k_l}$ is $\overline{n}_f\in\mathbb{R}$} break
\EndIf
\EndIf
\EndFor
\For{$(k_1\in g_{k_l},k_2\in g_{k_l})$}
\If{$A_{\mathcal{DT}}(k_1,k_2)$}
\State Add $(k_{1,2},k_{2,1})$ into $\mathcal{E}_{k_l}^{\mathfrak{ml}}$
\EndIf
\EndFor
\EndFor
\end{algorithmic}
\end{algorithm}
\subsection{Proof of Theorem~\ref{theorem_networked_controller}: Baseline Distance-based Orbital Stabilizer}
\begin{proof}
\label{proof_theorem_networked_controller}
We compute a smooth collective potential function $\mathcal{V}_{\mathrm{all}}\in\mathbb{R}$ as a candidate of the Lyapunov function:
\begin{equation}
\label{potential_function}
\begin{aligned}
&V_{\mathrm{all}}(t,\varrho)=
\frac{k_A}{2}
\left\|
\begin{bmatrix}
\hat{E}^\top[2C_1]\\
\sqrt{\gamma_A} E^\top [C_4-\psi C_1]
\end{bmatrix}
\right\|^2+V_{\delta}\\
&V_{\delta}=\sum_{i}\sum_{j\in\mathcal{N}_i}{k_{Bij}}\int_{\|r_{xyd}\|_\sigma}^{\|\overline{r}_{xyij}\|_\sigma}s(\mathsf{r}) \sqrt{1+\epsilon\|\mathsf{r}\|^2}\mathrm{d}\mathsf{r}
\end{aligned}
\end{equation}
where the $\sigma$-norm $\|z\|_\sigma=(\sqrt{1+\epsilon\|z\|^2}-1)/\epsilon$ with a fixed parameter $\epsilon \in(0,1)$. 
\begin{equation}
\label{drdt_2}
\begin{aligned}
\frac{\mathrm{d}\overline{r}_{xyjk}^2}{\mathrm{d}t}
=&2k_0
\begin{bmatrix}
\tau(-\varsigma +\frac{c_-}{2})C_{42}\\
(\frac{\varrho}{2}-c_+ )C_{13}+\frac{-\varsigma \psi\tau}{2}C_{42}
\end{bmatrix}
\begin{bmatrix}
U_x\\
U_y
\end{bmatrix}\\
&-2\omega_{xy}\tau\left(\frac{\epsilon_2 \varsigma}{\omega_{xy}}-\varrho\right)C_{42}^\top C_1\\
&-2\omega_{xy}C_{13}\left\{\varsigma (C_4-\psi C_1)+(\tau-1) C_{42}\right\}\\
\end{aligned}
\end{equation}
where $C_{42}=(\varsigma (C_4-\psi C_1)+C_2)$, $C_{13}=(\varrho C_1+C_3)$. Let $L_s(t)\in\mathbb{R}^{N\times N}$ be defined as a weighted Laplacian matrix: 
$$
L_{s(ij)}=\left\{
\begin{aligned}
&-k_{Bij}s(\overline{r}_{xyij}),\ &&\mathrm{if}\ j \in \mathcal{N}_i \\
&\sum_{j \in N_i}k_{Bij}s(\overline{r}_{xyij}),\ &&\mathrm{if}\ i=j \\
&0,\ &&\mathrm{else}
\end{aligned}.
\right.
$$
Then, the time derivative of $V_{\mathrm{all}}$ is
$$
\begin{aligned}
\dot{V}_{\mathrm{all}}
=&2
\begin{bmatrix}
\mathfrak{e}_4+k_1\tau \mathfrak{e}_2\\
\frac{k_A}{2}\hat{E}\hat{E}^\top[2C_1]+\frac{\psi}{2} E \mathfrak{e}_4\\
{(\frac{\varrho}{2}-c_+ )}\mathfrak{e}_3+\frac{k_1(-\varsigma \psi\tau)/{2}}{(-\varsigma +\frac{c_-}{2})}\mathfrak{e}_2
\end{bmatrix}^\top
\begin{bmatrix}
k_0 E^\top U_x\\
k_0 U_y\\
k_0 E^\top U_y
\end{bmatrix}\\
&+\frac{2\epsilon_2}{k_A}
\begin{bmatrix}
\mathfrak{e}_4\\
\mathfrak{e}_2
\end{bmatrix}^\top
\begin{bmatrix}
    1&\frac{2\omega_{xy}\varsigma}{\gamma_A \epsilon_2}\\
    \frac{c_-}{2c_+}\frac{(\frac{\varrho}{2}-\frac{\epsilon_2 \varsigma}{2\omega_{xy}})}{(-\varsigma +\frac{c_-}{2})}&O\\
\end{bmatrix}
\begin{bmatrix}
\mathfrak{e}_1\\
\mathfrak{e}_3
\end{bmatrix}\\
&+\frac{k_A}{2}
[2C_1] \left\{\frac{\mathrm{d}}{\mathrm{d}t}\left(\hat{E}\hat{E}^\top\right)\right\}[2C_1]\\
\end{aligned}
$$
where we use $\frac{\mathrm{d}\|r_{xyij}\|_\sigma}{\mathrm{d}t}=r_{xyij}\dot{r}_{xyij}/\sqrt{1+\epsilon\|\mathsf{r}_{xyij}\|^2}$, $k_1=(c_-\epsilon_2)/(4c_+\omega_{xy})$, and $\mathfrak{e}_{1,2,3,4}$ are defined as the following edge vectors:
$$
\begin{bmatrix}
    \mathfrak{e}_1\\\mathfrak{e}_3
\end{bmatrix}
=
\begin{bmatrix}
\frac{k_A}{2}E^\top [2C_1]\\    
W_sE^\top C_{13}
\end{bmatrix}
,\ 
\begin{bmatrix}
\mathfrak{e}_4\\\mathfrak{e}_2
\end{bmatrix}
=
\begin{bmatrix}   
-\frac{k_A\gamma_A }{2} E^\top[C_4-\psi C_1]\\
\frac{(-\varsigma +\frac{c_-}{2})}{k_1}W_s E^\top  C_{42}
\end{bmatrix}.
$$

We first consider our main controller $u_j$ in~(\ref{networked_input_i}) for $k_B\neq0$ and $k_B=0$. The matrix formulation of $u_j$ in~(\ref{networked_input_i}) is as follows:
$$
\begin{bmatrix}
    U_x\\
    U_y
\end{bmatrix}
=\frac{1}{k_0}(I_2\otimes E)
\begin{bmatrix}
    -\gamma_B^2 \mathfrak{e}_4+f_{0}\mathfrak{e}_1 -\frac{\gamma_B^2}{k_1}\mathfrak{e}_2\\
    -\mathfrak{e}_1+g_{0}\mathfrak{e}_2\\
\end{bmatrix}
$$
where $\psi,k_1,f_{0},g_{0}$ are presented in Eq.~(\ref{coefficient_condition}). We note that ${\overline{r}}_{xy(0,0,1,0)}^2=r_{\mathrm{xy}}^2$, $\overline{r}_{xy(2c_+,c_-/2,1,0)}^2=\overline{x}^2+(\overline{y}/2)^2$, and ${\overline{r}}_{xy({c_-\epsilon_2}/{(2\omega_{xy})},0,1,0)}^2=({\dot{\overline{x}}}/{\omega_{xy}})^2+({\dot{\overline{y}}}/{(2\omega_{xy})})^2$. It is worth noting that the above satisfies $U_j =E[\cdot]$, which indicates that $[C_i] = [C_i] -\frac{\mathbf{1}^\top C_i}{n}\mathbf{1}$. By selecting $(\varrho,\varsigma,\tau,\psi)=(2c_+,0,1,0)$ and $\hat{E}=E$ and applying this input, the following time derivative of $V_{\mathrm{all}}$ is derived:
$$
\begin{aligned}
\dot{V}_{\mathrm{all}}
&=2
\begin{bmatrix}
\mathfrak{e}_4+k_1\mathfrak{e}_2\\
\mathfrak{e}_1+\frac{\psi}{2}\mathfrak{e}_4\\
\left(\mathfrak{e}_4 +\mathfrak{e}_2\right)
\end{bmatrix}^\top
\begin{bmatrix}
k_0E^\top U_x\\
k_0E^\top U_y\\
\frac{\epsilon_2}{k_A}\mathfrak{e}_1
\end{bmatrix}
=-2
x^\top
\begin{bmatrix}
\gamma_B^2 L_e &
\gamma_BK_0\\
\gamma_BK_0& \gamma_B^2 L_e 
\end{bmatrix}
x
\end{aligned}
$$
where $K_0=(\frac{\lambda_0}{2} L_e-\frac{\epsilon_2}{2k_A}I)$, $x=[\mathfrak{e}_1/\gamma_B;\mathfrak{e}_4+\mathfrak{e}_2]$. Moreover, the definition of $\psi$ in Eq.~(\ref{coefficient_condition}) is the solution of 
$$
\frac{\psi}{2}\left((1-k_1)f_{0}-\frac{\psi}{2}\right)=\gamma_B \left(k_1+\frac{1}{k_1}-2\right)=\gamma_B \frac{(k_1-1)^2}{k_1}
$$ where $c\triangleq(k_1+1/k_1-2)=\frac{(k_1-1)^2}{k_1}$. The following relationship for arbitrary vectors $\mathsf{a}\in\mathbb{R}^{|\mathcal{E}|}$ and $\mathsf{b}\in\mathbb{R}^{|\mathcal{N}|}$ is then derived:  
$$
\begin{aligned}
\mathsf{a}^\top   E^\top \mathsf{b}&
= \mathsf{a}^\top    V_+(V_+^\top V_+)\Sigma_+^\top U_+^\top \mathsf{b}
= \mathsf{a}^\top V_+V_+^\top E^\top \mathsf{b}\\
\end{aligned}
$$
where $V_+\in\mathbb{R}^{\smash{|\mathcal{E}|\times |\mathcal{V}|-1}}$ in subsection~\ref{Algebraic_Graph_Theory}. By applying $\mathsf{a}=\mathfrak{e}_4 +\mathfrak{e}_2$ and $E^\top \mathsf{b}=\mathfrak{e}_1$ for the above relationship into $\dot{V}_{\mathrm{all}}$, we obtained the following:
$$
\begin{aligned}
\dot{V}_{\mathrm{all}}
&=-2\gamma_B
\hat{x}^\top
\begin{bmatrix}
\gamma_B D_+ &
\frac{\lambda_0}{2} D_+ -\frac{\epsilon_2}{2k_A}I\\
*& \gamma_B D_+ 
\end{bmatrix}\hat{x}\\
&=-\gamma_B\|V_+^\top[\mathfrak{e}_1/\gamma_B\pm (\mathfrak{e}_4+\mathfrak{e}_2)]\|^2_{P_{\pm}}
\end{aligned}
$$
where $\hat{x}=(I_2\otimes V_+^\top)x$ and $P_{\pm}\triangleq \gamma_B D_+\pm(\frac{\lambda_0}{2} D_+-\frac{\epsilon_2}{2k_A}I)$. Since the definition in Eqs.~(\ref{coefficient_condition}) and (\ref{definition_d}) guarantees $P_{\pm}\succeq 0$, we obtain $\|V_+^\top[\mathfrak{e}_1/\gamma_B\pm (\mathfrak{e}_4+\mathfrak{e}_2)]\|\rightarrow 0$ as $t\rightarrow 0$; thus, $\|V_+^\top[\mathfrak{e}_1/\gamma_B]\|\rightarrow 0$ and $\|V_+^\top[ (\mathfrak{e}_4+\mathfrak{e}_2)]\|\rightarrow 0$. 
Then, $\|V_+^\top \mathfrak{e}_1\|\rightarrow 0$ shows $ that 
\|U_+^\top [C_1]\|\rightarrow 0$, which indicates that $[C_1]$ asymptotically converges to the eigenspace associated with the zero eigenvalue, i.e., $[C_1] \rightarrow \frac{\mathbf{1}^\top C_1}{|\mathcal{V}|}\mathbf{1}$ or $E^\top[C_1] \rightarrow 0$ by the definition in subsection~\ref{Algebraic_Graph_Theory}. The equation of $E^\top\dot{C}_{1}$ with $  
\|U_+^\top [C_1]\|\rightarrow 0$ implies that $E^\top U_y\rightarrow 0$ and $\mathfrak{e}_2,W_s E^\top  C_{2}\rightarrow 0$. Finally, $\|V_+^\top[ (\mathfrak{e}_4+\mathfrak{e}_2)]\|\rightarrow 0$ shows that
$$
U_+^\top [C_4]\rightarrow \frac{c_-}{k_1 k_A\gamma_A}V_+^\top W_s E^\top  C_{2}\rightarrow 0
$$
where $U_+^\top [C_4]$ is the eigenspace associated with the zero eigenvalue and $E^\top [C_4]\rightarrow 0$ and $E^\top U_x\rightarrow 0$. For the case of $k_B=0$, we obtain globally asymptotically stable results for $E^\top[C_{1,4}]$ for arbitrary gains. Integrating $[\dot{C}_1]=-\frac{k_A}{2}L[C_1]$ yields an exponentially stable result: 
$\|\mathfrak{e}_1\|\leq \|\mathfrak{e}_{1(0)}\|e^{-(k_A\lambda_2/2)t}$. We define $\hat{E}\hat{E}^\top = \alpha(t)L$, where $\alpha(t)>0$ is as follows:
$$
\alpha(t)\triangleq\alpha_0e^{\frac{k_A\lambda_2 t}{2}}>0\quad\text{s.t.}\quad\alpha_0\geq \frac{2|-\lambda_0|\|L\|_2 +{2\epsilon_2}/{k_A}}{\lambda_2\|\mathfrak{e}_{1(0)}\|/\sup_t \|\mathfrak{e}_{4(t)}\|}
$$
This provides $\dot{V}_{\mathrm{all}}\leq -2\gamma_B^2\mathfrak{e}_4^\top L \mathfrak{e}_4+\dot{V}_{14}$ where
$$
\begin{aligned}
&\dot{V}_{14}=\left(\left(2(-\lambda_0) L+\frac{2\epsilon_2}{k_A}I
\right)\mathfrak{e}_4-\left(2\alpha L- \frac{2\dot{\alpha}}{k_A}I\right)\mathfrak{e}_1\right)^\top\mathfrak{e}_1\\
&\leq\left(\left(2|-\lambda_0|\|L\|_2 +\frac{2\epsilon_2}{k_A}\right)
\sup_t \|\mathfrak{e}_4(t)\|-\lambda_2\|\mathfrak{e}_{1(0)}\|
\alpha_0\right)\|\mathfrak{e}_1\|.
\end{aligned}
$$
Based on the inequality of $\alpha_0$, we obtain $\dot{V}_{14}\leq 0$ and $\dot{V}_{\mathrm{all}}\leq -2\gamma_B^2\mathfrak{e}_4^\top L \mathfrak{e}_4\leq 0$. This guarantees that $\mathfrak{e}_4\rightarrow 0$ as $t\rightarrow\infty$, and thus, $E^\top C_4\rightarrow 0$ along with the exponentially stable results of $E^\top C_1$.

We next consider the subcontroller $\hat{u}_j$ in Eq.~(\ref{sub_networked_input_i}), and its matrix formulation is given as follows:
\begin{equation*}
\begin{bmatrix}
{U}_x(\hat{u}_j)\\
{U}_y(\hat{u}_j)
\end{bmatrix}
=
\begin{bmatrix}
\frac{k_A\gamma_A}{2k_0}L[C_4] -\frac{c_-}{2k_0} L_s [C_2]\\
-\frac{k_A}{2k_0}L[2C_1]
+\frac{\gamma_A \epsilon_2}{2k_0}L[C_4]
-\frac{\epsilon_2}{k_A k_0}\frac{c_-}{2k_1}L_sC_{2}
\end{bmatrix}.
\end{equation*}
Then, by applying $(\varrho,\varsigma,\tau,\psi)=(2c_+,0,1,0)$, $\hat{E}=I$, and ${U}_{x,y}(\hat{u}_j)$, we obtained the following:
$$
\begin{aligned}
\dot{V}_{\mathrm{all}}
&=2k_0^2
\begin{bmatrix}
-\frac{k_A\gamma_A }{2k_0} L[C_4]+\frac{c_-}{2k_0}L_s  C_{2}\\
U_y-\frac{\gamma_A\epsilon_2}{2k_0} L[C_4]+\frac{\epsilon_2}{k_Ak_0}\frac{c_-}{2k_1}L_s  C_{2}
\end{bmatrix}^\top
\begin{bmatrix}
 U_x\\
\frac{k_A}{2k_0}[2C_1] 
\end{bmatrix}\\
&=-\frac{k_A^2}{2}\|E^\top [2C_1]\|^2-2k_0^2 \|{U}_x\|^2\\
\end{aligned}
$$
which indicates that both $\|E^\top [C_1]\|$ and ${U}_x(\hat{u}_j)$ converge to 0 as $t \rightarrow \infty$. Consequently, the equation for $E^\top\dot{C}_{1,4}$ implies that $E^\top U_y\rightarrow 0$ and $E^\top [C_4]\rightarrow\mathrm{const}$. The definitions of $U_x$ and $U_y$ reveal that $E^\top [C_4]\rightarrow 0$ and $W_s E^\top C_2\rightarrow 0$. For the case of $k_B=0$, it is evident that the $E^\top [C_4]\rightarrow 0$ and $E^\top [C_4]\rightarrow 0$ as $t\rightarrow\infty$.

We finally mention the convergence results of $\overline{r}_{xyjk}$ for arbitrary $jk$-th satellites. When we achieve $U_{xy}\rightarrow 0$ and $E^\top [C_{1,4}]\rightarrow 0$, we obtained $\dot{\overline{r}}_{xyjk}=0$ in Eq.~(\ref{drdt_2}). This indicates that $\overline{r}_{xyjk}$ takes a constant value and $[C_{2,3}]$ takes a periodic value. These satisfy $W_sE^\top[C_2]=0$ for $k_B\neq 0$. 

Finally, we show the stability proof for the $z$-motion for two satellites, which can be extended to the $N$ satellite case. For the $z$-axis controller, consider the Lyapunov candidate function $V_{56}=(C_5-C_{5r})^2/2$, which satisfies $\dot{C}_{5r}=-\omega_{xy}C_{6d}+\cos\theta_0\hat{d}_y$, where $\hat{d}_y=-k_1\frac{c_-}{2}\frac{\epsilon_2}{k_A}\frac{k_0}{2}d_b$, the constant gains $k_z\in\mathbb{R}_+$, $C_{5d}$ are defined in Eq.~(\ref{r_xy_and_r_z}), and $C_{5r}={C}_{5d}+{c_-k_1\cos\theta_0}\delta{C}_4/2-2c_+k_1\sin\theta_0{C}_1$. Substituting $u_{\mathrm{safe}}$ into this time derivative yields $\dot{V}_{56}=-2k_{z} \delta C_5^2\leq 0$ for $\hat{d}_y=d_z=0$. Then, $(C_5,C_6)$ asymptotically converges to $(C_{5d},C_{6d})$, driving the satellites to the desired orbital plane as $C_{1,4}\rightarrow 0$.
\end{proof}
\section*{Funding Sources}
\noindent
This research was supported by JST's Next-Generation Challenging Research Program JPMJSP2106 and the JAXA, Space Strategy Fund GRANT Number (JPJXSSF24MS09003), Japan. 
\section*{Acknowledgments}
\noindent 
The authors would like to thank Hiraku Sakamoto, Seang Shim, and Hayate Tajima for their technical discussion.
\bibliographystyle{new-aiaa}
\bibliography{referencesdownload}
\end{document}